\begin{document}
\doi{10.1080/00207170xxxxxxxxxxxx}
 \issn{1366-5820}
\issnp{0020-7179} %\jvol{00}
% \jnum{00} \jyear{2005} \jmonth{10 January}

 %%% For numbering of theorems, definitions, lemmas, etc --

%\def\theequation{\thesection.\arabic{equation}}

%%%%  Use plain pagestyle

%\pagestyle{plain}

% DEFINITIONS

\newcommand{\dst}{\displaystyle}

\def\bbc{{\mathbb C}}
\def\bbr{{\mathbb R}}
\newcommand{\be}{\begin{equation}}
\newcommand{\ee}{\end{equation}}
\newcommand{\bd}{\begin{displaymath}}
\newcommand{\ed}{\end{displaymath}}
\newcommand{\bea}{\begin{eqnarray}}
\newcommand{\eea}{\end{eqnarray}}
\newcommand{\R}{\mathbb{R}}
\newcommand{\Rp}{\mathbb{R}_{+}}
\newcommand{\Hi} {{\cal H}^\infty}
\newcommand{\RHi}{{\cal RH}^\infty}
\newcommand{\rhp}{\mathbb{C}_{+}}
\newcommand{\Zp}{\mathbb{Z}_{+}}
\newcommand{\Znn}{\mathbb{Z}_{\geq0}}

\markboth{S.~Gumussoy and H.~\"Ozbay}{S.~Gumussoy and H.~\"Ozbay}

\title{Stable $\Hi$ controller design for time-delay
systems$^\dagger$\footnote{$\dagger$ This work is supported in part
by the European Commission under contract No.~MIRG-CT-2004-006666,
and by T\"UB\.{I}TAK under grant numbers EEEAG-105E065 and
EEEAG-105E156}}

\author{S.~Gumussoy$^\ddagger$ and H.~\"Ozbay\thanks{$^\ast$Corresponding author. Email: hitay@bilkent.edu.tr or
ozbay.1@osu.edu\vspace{10pt}
\newline\centerline{\tiny{ {\em International Journal of
Control}}}
\newline\centerline{\tiny{ISSN 0020-7179 print/ ISSN
1366-5820
 online
\textcopyright 2005 Taylor \& Francis Ltd}}
\newline\centerline{\tiny{ http://www.tandf.co.uk/journals}}
\newline \centerline{\tiny{DOI:
10.1080/00207170xxxxxxxxxxxx}}}$^\ast$$^\S$\\\vspace{6pt}
$^\ddagger$MIKES Inc., Akyurt, ANKARA TR-06750, Turkey\\
$^\S$Dept. of Electrical and Electronics Eng., Bilkent University,
Bilkent, Ankara TR-06800, Turkey, \\on leave from Dept. of
Electrical and Computer Eng., Ohio State University, Columbus, OH
43210} \received{\today}

\maketitle
\begin{abstract}
This paper investigates stable suboptimal $\Hi$ controllers for a
class of single-input single-output time-delay systems. For a given
plant and weighting functions, the optimal controller minimizing the
mixed sensitivity (and the central suboptimal controller) may be
unstable with finitely or infinitely many poles in $\rhp$.  For each
of these cases search algorithms are proposed to find stable
suboptimal $\Hi$ controllers. These design methods are illustrated
with examples.
\end{abstract}

\section{Introduction}

In a feedback system, stable stabilizing controllers (also called
strongly stabilizing controllers) are desired for many practical
reasons, \cite{V85}. It is shown that \cite{YBL-AUT-74,Abedor89}
such controllers can be designed if and only if the plant satisfies
the parity interlacing property.  A design method for finding
strongly stabilizing controllers for SISO plants with input-output
(I/O) time delays is given in \cite{Suyama91} where a stable
controller is constructed by finding a unit (an outer function whose
inverse is proper) satisfying certain interpolation conditions.

In the literature, stable controllers satisfying a performance
requirement are also studied. For example, design methods are given
for $\Hi$ strong stabilization for finite dimensional plants, see
e.g.
\cite{Sideris,Ganesh,Jacobus,Ito,Barabanov,ZO99,ZO00,CC-TAC-01,Zhou01,LS-SCL-02,CZ-TAC-03,CWL-CDC-03}
and their references. For time delay systems, $\Hi$-based strong
stabilization is also considered. Optimal stable $\Hi$ controller
design for a class of SISO time-delay systems within the framework
of the weighted sensitivity minimization problem is studied in
\cite{GO-ICCARV-06}. It is known that $\Hi$ controllers for
time-delay systems with finitely many unstable poles can be designed
by the methods in \cite{FTZ86,ZK87,TO95,GO-IFAC-06}. For this class
of plants, weighted sensitivity problem may result in an optimal
$\Hi$ controller with infinitely unstable modes, \cite{Flamm,Lenz}.
For the mixed sensitivity minimization problem, an indirect approach
to design a stable controller achieving a desired $\Hi$ performance
level for finite dimensional SISO plants with I/O delays is proposed
in \cite{Gumussoy02}. This approach is based on stabilization of the
unstable optimal, or central suboptimal, $\Hi$ controller by another
$\Hi$ controller in the feedback loop. In \cite{Gumussoy02},
stabilization is achieved and the sensitivity deviation is minimized
under certain sufficient conditions. There are two main drawbacks of
this method. First, the solution of sensitivity deviation brings
conservatism because of finite dimensional approximation of the
infinite dimensional weight. Second, the stability of overall
sensitivity function is
not guaranteed. %Also, overall system does not achieve the exact
%performance level, since the (sub)optimal $\Hi$ controller is
%perturbed.

In \cite{GO-MTNS-04} we focused on strong stabilization problem for
SISO plants with I/O delays  such that the stable controller
achieves the pre-specified suboptimal $\Hi$ performance level in the
mixed sensitivity minimization problem. When the optimal controller
is unstable (with infinitely or finitely many unstable poles), two
methods are given based on a search algorithm to find a stable
suboptimal controller. However, both methods are conservative. In
other words, there may be a stable suboptimal controller achieving a
smaller performance level. In \cite{GO-MTNS-04} necessary conditions
for stability of optimal and suboptimal controllers are also given.

In this paper, the results of \cite{GO-MTNS-04} are extended for
general SISO time-delay systems in the form \be \label{eq:preplant}
P(s)=\frac{r_p(s)}{t_p(s)}=\frac{\sum_{i=1}^{n}r_{p,i}(s)e^{-h_i
s}}{\sum_{j=1}^{m}t_{p,j}(s)e^{-\tau_j s}} \ee satisfying the
assumptions
\begin{enumerate}
    \item[A.1]
    \begin{enumerate}
    \item $r_{p,i}(s)$ and $t_{p,j}(s)$ are polynomials with
    real coefficients;
    \item $\{h_i\}_{i=1}^n$ and $\{\tau_i\}_{i=1}^m$
    are two sets of strictly increasing nonnegative
    rational numbers with $h_1\geq\tau_1$;
    \item define the polynomials $r_{p,i_{max}}$ and
    $t_{p,j_{max}}$ with largest
    polynomial degree in $r_{p,i}$ and $t_{p,j}$ respectively
    (the smallest index if there is more than one),
    then, ${\textrm{deg}\{r_{p,i_{max}}(s)\}}
    \leq {\textrm{deg}\{t_{p,{j_{max}}}(s)\}}$
    and
    $h_{i_{max}}\geq\tau_{j_{max}}$ where $\textrm{deg}\{.\}$
    denotes the degree of the polynomial;
    \end{enumerate}
    \item[A.2] $P$ has no imaginary axis zeros or poles;
    \item[A.3] $P$ has finitely many unstable poles, or
    equivalently $t_{p}(s)$ has finitely many zeros in
    $\rhp$;
    \item[A.4] $P$ can be written in the form of
    \be \label{eq:mnfac}
    P(s)=\frac{m_n(s)N_o(s)}{m_d(s)}
    \ee
    where $m_n$, $m_d$ are inner, infinite and finite dimensional,
    respectively;
    $N_o$ is outer, possibly infinite dimensional as in \cite{TO95}.
\end{enumerate}
Conditions stated in $A.1$ are not restrictive, and in most cases
$A.2$ can be removed if the weights are chosen in a special manner.
The conditions $A.3-A.4$ come from the restrictions of the
Skew-Toeplitz approach to $\Hi$-control of infinite dimensional
systems.
%\cite{TO95} and its dual approach \cite{GO04}.
It is not easy to check assumptions $A.3-A.4$, unless a
quasi-polynomial root finding algorithm is used. In Section
\ref{sec:Pfact}, we will give a necessary and sufficient condition
to check the assumption $A.3$.
%A condition is obtained to
%check whether the factorization (\ref{eq:mnfac}) is possible.
%Using this condition, the plant (\ref{eq:preplant}) can be
%factorized as (\ref{eq:mnfac}) by the algorithm given in
%Section~\ref{sec:Pfact}.

The optimal $\Hi$ controller, $C_{opt}$, stabilizes the feedback
system and achieves the minimum $\Hi$ cost,~$\gamma_{opt}$: \be
\label{eq:wsm} \gamma_{opt} = \left\| \left[
\begin{array}{c}
  W_1(1+PC_{opt})^{-1} \\
  W_2PC_{opt}(1+PC_{opt})^{-1}
\end{array} \right] \right\|_\infty
= \inf_{C \; stab. \; P} \left\| \left[ \begin{array}{c}
  W_1(1+PC)^{-1} \\
  W_2PC(1+PC)^{-1}
\end{array} \right] \right\|_\infty
\ee where $W_1$ and $W_2$ are finite dimensional weights for this
mixed sensitivity minimization problem.

In Section \ref{sec:Pfact}, conditions are given to check
assumptions $A.3$ and $A.4$, and an algorithm is derived for the
plant factorization (\ref{eq:mnfac}). Section \ref{sec:Cstruct}
discusses the structure of optimal and suboptimal $\Hi$ controllers.
Stable suboptimal $\Hi$ controller design methods for the cases
where the optimal controller has infinitely or finitely many
unstable poles are discussed in Sections \ref{sec:infinitelymany}
and \ref{sec:finitelymany} respectively. Examples can be found in
Section \ref{sec:examp}, and concluding remarks are made in Section
\ref{sec:concl}.

\textbf{Definition:} A function $F(s)$ defined on right half of
complex plane is called {\it proper} (respectively {\it strictly
proper}) if \bd \lim_{|s|\rightarrow\infty}|F(s)|<\infty \quad
\left(\textrm{respectively}
\lim_{|s|\rightarrow\infty}|F(s)|=0\right). \ed The function is
called {\it biproper} if the limit converges to a nonzero value.

\section{Assumptions and Factorization of Plant } \label{sec:Pfact}

Note that by multiplying and dividing (\ref{eq:preplant}) by a
stable polynomial, it is always possible to put the plant in the
form \be \label{eq:seqplant}
P(s)=\frac{R(s)}{T(s)}=\frac{\sum_{i=1}^n R_i(s)e^{-h_i
s}}{\sum_{j=1}^m T_j(s)e^{-\tau_j s}} \ee where $R_i$ and $T_j$ are
finite dimensional, stable, proper transfer functions.   In this
section, we study conditions to verify assumptions $A.3$ and $A.4$.

\begin{lemma} (\cite{GO-IFAC-06}) \label{lemma:finitezeros}
Assume that $R(s)$ in (\ref{eq:seqplant}) has no imaginary axis
zeros and poles, then the system, $R$, has finitely many unstable
zeros if and only if all the  roots of the polynomial,
$\varphi(r)=1+\sum_{i=2}^n \xi_i r^{\tilde{h}_i-\tilde{h}_1}$
 has magnitude greater than $1$ where
\bea \nonumber  \xi_i
 &=&\lim_{\omega\rightarrow\infty}R_i(j\omega)R_1^{-1}(j\omega)
 \quad \forall \;i=2,\dots,n, \\
\nonumber h_i&=&\frac{\tilde{h}_i}{N}, \quad N, \tilde{h}_i \in\Zp,
\; \forall \;i=1,\dots,n. \eea
\end{lemma}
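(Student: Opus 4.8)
The plan is to reduce the question of zeros of $R(s)$ in the closed right half-plane to the asymptotic (high-frequency) behaviour of the quasi-polynomial $R(s) = \sum_{i=1}^n R_i(s)e^{-h_i s}$, using the fact that a retarded/neutral quasi-polynomial of this type can have infinitely many zeros in $\rhp$ only if its \emph{chains of zeros} accumulate along a vertical strip, and such chains are governed by the principal (highest-exponential) terms. First I would factor out the dominant delay term: write $R(s) = R_1(s)e^{-h_1 s}\bigl(1 + \sum_{i=2}^n R_i(s)R_1^{-1}(s)\,e^{-(h_i-h_1)s}\bigr)$. Since $R_1$ is a stable proper rational function with no imaginary-axis zeros, the factor $R_1(s)e^{-h_1 s}$ contributes no zeros in the closed $\rhp$, so the unstable zeros of $R$ are exactly the $\rhp$-zeros of $G(s) := 1 + \sum_{i=2}^n R_i(s)R_1^{-1}(s)\,e^{-(h_i-h_1)s}$.

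Next I would analyze $G$ on the region $\{\Re s \ge 0,\ |s|\ge \rho\}$ for large $\rho$. On this region each ratio $R_i(j\omega)R_1^{-1}(j\omega) \to \xi_i$, and more generally $R_i(s)R_1^{-1}(s) \to \xi_i$ as $|s|\to\infty$ with $\Re s\ge 0$ (the $R_i$ are proper, so these limits exist; assumption $A.2$/no imaginary zeros keeps $R_1^{-1}$ bounded). Hence on this region $G(s)$ is a perturbation of the ``frozen-coefficient'' quasi-polynomial $G_\infty(s) = 1 + \sum_{i=2}^n \xi_i e^{-(h_i-h_1)s}$. Substituting the common rational structure $h_i = \tilde h_i/N$ and the change of variable $r = e^{-s/N}$, the zeros of $G_\infty$ in $\Re s \ge 0$ correspond to roots $r$ with $|r|\le 1$ of $\varphi(r) = 1 + \sum_{i=2}^n \xi_i r^{\tilde h_i - \tilde h_1}$. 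So the claim becomes: $R$ has finitely many unstable zeros $\iff$ $\varphi$ has all roots strictly outside the closed unit disk.

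For the ``if'' direction I would argue that when all roots of $\varphi$ satisfy $|r|>1$, the function $|G_\infty(s)|$ is bounded below by a positive constant on $\Re s \ge 0$ (equivalently $|\varphi(r)|$ is bounded below on $|r|\le 1$, being continuous and nonvanishing on that compact set, and one controls the behaviour as $\Re s \to +\infty$ where $G_\infty \to 1$); then a Rouché-type / small-perturbation estimate using $R_i R_1^{-1} \to \xi_i$ shows $G$ is also bounded away from $0$ on $\{\Re s\ge 0, |s|\ge\rho\}$, so all $\rhp$-zeros of $G$ lie in a compact set and are therefore finite in number (they cannot accumulate by the identity theorem, since $G$ is analytic and $\not\equiv 0$ there). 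For the ``only if'' direction, if $\varphi$ has a root $r_0$ with $|r_0|\le 1$, I would produce a vertical chain of zeros of $G_\infty$ in $\Re s \ge 0$ (the equation $e^{-s/N} = r_0$ together with the polynomial matching gives infinitely many solutions $s_k$ with $\Re s_k \to \Re(-N\ln|r_0|)\ge 0$, bounded real parts, $|\Im s_k|\to\infty$), and then show by an argument-principle / continuity perturbation that each $s_k$, for $k$ large, has a nearby genuine zero of $G$ with $\Re s \ge 0$ (or with real part tending to a nonnegative limit, which still yields infinitely many in $\rhp$ after an arbitrarily small horizontal shift handled by assumption $A.2$) — hence infinitely many unstable zeros.

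The main obstacle I expect is the boundary case $|r_0| = 1$ (chains asymptotic to the imaginary axis) and, relatedly, making the perturbation $R_i R_1^{-1} = \xi_i + o(1)$ rigorous \emph{uniformly} on the unbounded region $\{\Re s\ge 0,\ |s|\ge\rho\}$ rather than merely along the imaginary axis: one must ensure the $o(1)$ error does not destroy the Rouché comparison where $|G_\infty|$ is small, and must rule out zeros drifting off to $\Re s = +\infty$. Assumption $A.2$ (no imaginary-axis zeros/poles) and the properness in $A.1$ are exactly what is needed to keep $R_1^{-1}$ and the ratios bounded and with well-defined limits on the closed $\rhp$, so I would lean on those to close the uniform-estimate gap; the strict inequality $|r|>1$ in the statement is what cleanly excludes the delicate imaginary-axis-accumulation case and makes the ``finitely many'' conclusion hold.
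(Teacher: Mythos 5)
The paper itself gives no proof of this lemma --- it is imported verbatim from \cite{GO-IFAC-06} --- so your outline can only be measured against the standard argument, which is indeed the one you sketch: factor out $R_1e^{-h_1s}$, compare $G(s)=1+\sum_{i\geq2}R_i(s)R_1^{-1}(s)e^{-(h_i-h_1)s}$ with its high-frequency limit $G_\infty(s)=1+\sum_{i\geq2}\xi_ie^{-(h_i-h_1)s}$, substitute $r=e^{-s/N}$ so that zero chains of $G_\infty$ in $\Re s\geq0$ correspond to roots of $\varphi$ in the closed unit disk, and transfer the conclusion to $R$ by a Rouch\'e/uniform-convergence argument. That plan is sound for the ``if'' direction and for the ``only if'' direction when $\varphi$ has a root of modulus strictly less than one: rational functions converge to their values at infinity uniformly outside a large disk (so the comparison is uniform on $\{\Re s\geq0,\ |s|\geq\rho\}$), and the $2\pi N$-periodicity of $G_\infty$ in $\Im s$ makes the Rouch\'e estimate uniform along a chain. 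One slip to fix: stability of $R_1$ constrains its poles, not its zeros, so $R_1e^{-h_1s}$ may very well vanish in $\rhp$; this is harmless only because $R_1$ is rational and contributes finitely many zeros, not because it contributes none.

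The genuine gap is the boundary case you flag, a root $r_0$ of $\varphi$ with $|r_0|=1$, in the ``only if'' direction, and your proposed repair does not close it. The hypothesis that $R$ has no imaginary-axis zeros (and A.2) only excludes zeros sitting exactly on the axis; it provides no ``arbitrarily small horizontal shift'' that would push a chain whose real parts tend to $0$ from below into $\rhp$. Worse, the data $\xi_i$ you use cannot decide this case at all. Take $R(s)=1+\frac{s+1}{s+2}\,e^{-s}$: it satisfies the hypotheses of the lemma, $\xi_2=1$, so $\varphi(r)=1+r$ has the root $-1$ on the unit circle; yet $R(s)=0$ forces $e^{-s}=-\frac{s+2}{s+1}$, which is impossible for $\Re s\geq0$ since there $|e^{-s}|\leq1$ while $\bigl|\frac{s+2}{s+1}\bigr|>1$, so the chain lies entirely in the open left half-plane and only accumulates on the axis. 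Hence which side of the axis a modulus-one chain occupies is governed by the next-order terms in the expansion of $R_iR_1^{-1}$ at infinity, not by the limits $\xi_i$, and a first-order Rouch\'e comparison against $G_\infty$ is structurally incapable of settling it. A complete proof of the stated equivalence must either carry out that finer asymptotic analysis or rely on a stronger reading of ``finitely many unstable zeros'' (zeros counted in every shifted half-plane $\Re s>-\epsilon$, as is natural for the $F$-system notion used later); as written, your only-if direction is incomplete exactly there.
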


We define the conjugate of $R(s)=\sum_{i=1}^n R_i(s)e^{-h_i s}$ in
(\ref{eq:seqplant}) as $\bar{R}(s):=e^{-h_ns}R(-s)M_C(s)$ where
$M_C$ is inner, finite dimensional whose poles are the poles of $R$.
If the time delay system $R$ has finitely many $\rhp$ zeros it is
called an {\it $F$-system}. It is clear that $R$ is an $F$-system if
it satisfies Lemma \ref{lemma:finitezeros}.  If the time delay
system $\bar{R}$ has finitely many $\rhp$ zeros then $R$ is said to
be an {\it $I$-system}.

\begin{corollary} (\cite{GO-IFAC-06}) \label{cor:A3A4}
The plant $P=\frac{R}{T}$ in (\ref{eq:seqplant}) satisfies $A.3-A.4$
if one of the following conditions hold: (i) $R$ is $I$-system and
$T$ is $F$-system, or (ii) $R$ and $T$ are $F$-systems with
$h_1>\tau_1$.
\end{corollary}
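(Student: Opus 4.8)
The plan is to verify $A.3$ and $A.4$ separately: $A.3$ is immediate, while $A.4$ rests on inner--outer factorizations of the functions $R$ and $T$ in (\ref{eq:seqplant}). For $A.3$, note that $R$ and $T$ in (\ref{eq:seqplant}) have only stable poles and, by $A.2$, $P$ has no imaginary-axis poles, so the unstable poles of $P=R/T$ are exactly the $\rhp$ zeros of $T$; in both parts of the corollary $T$ is an $F$-system, hence these are finitely many and $A.3$ holds.

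For $A.4$ I would first pull out the common delay. By $A.1$, $h_1\geq\tau_1$, so
\bd P(s)=e^{-(h_1-\tau_1)s}\,\frac{R'(s)}{T'(s)},\qquad R'(s)=\sum_{i=1}^{n}R_i(s)e^{-(h_i-h_1)s},\qquad T'(s)=\sum_{j=1}^{m}T_j(s)e^{-(\tau_j-\tau_1)s}, \ed
where $R',T'\in\Hi$ have smallest delay zero and the same $\rhp$ zeros (hence the same $F$- or $I$-system type) as $R,T$. A short exponent comparison shows that neither $R'$ nor $T'$ carries a delay-type inner factor $e^{-\tau s}$ with $\tau>0$, since this would force $e^{\tau s}R'\in\Hi$, contradicted by the unboundedness in $\rhp$ of the leading term $R_1(s)e^{\tau s}$. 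I would then write $T'=B_{T'}N_{T'}$, with $B_{T'}$ the finite Blaschke product carrying the finitely many $\rhp$ zeros of the $F$-system $T'$ and $N_{T'}$ outer, and $R'=m_{R'}N_{R'}$ with $m_{R'}$ inner and $N_{R'}$ outer, giving
\bd P(s)=e^{-(h_1-\tau_1)s}\,m_{R'}(s)\,\frac{1}{B_{T'}(s)}\,\frac{N_{R'}(s)}{N_{T'}(s)}. \ed
Taking $m_d=B_{T'}$ (inner, finite-dimensional), $N_o=N_{R'}/N_{T'}$ (a quotient of outer functions, hence outer and possibly infinite-dimensional), and $m_n=e^{-(h_1-\tau_1)s}m_{R'}$ (a product of inner functions, hence inner) yields $P=m_nN_o/m_d$; it remains only to verify that $m_n$ is genuinely infinite-dimensional.

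This is the only step in which parts (i) and (ii) differ. In case (ii), $R'$ is an $F$-system, so $m_{R'}=B_{R'}$ is a \emph{finite} Blaschke product and the infinite-dimensionality of $m_n$ comes entirely from the delay $e^{-(h_1-\tau_1)s}$, which is nontrivial precisely because $h_1>\tau_1$ is assumed. In case (i), $R$ (hence $R'$) is an $I$-system, so $\bar R$ is an $F$-system; the crucial observation is that $|\bar R(j\omega)|=|R(j\omega)|=|R'(j\omega)|$ for all $\omega$, because $|e^{-h_nj\omega}|=|M_C(j\omega)|=|e^{h_1j\omega}|=1$ and $R$ has real coefficients so $R(-j\omega)=\overline{R(j\omega)}$. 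Since two $\Hi$ functions with the same boundary modulus have the same outer factor up to a unimodular constant, the outer factor of $R'$ is a constant multiple of $\bar R/B_{\bar R}$, which is well behaved because $\bar R$ is an $F$-system; consequently every one of $R'$'s (in general infinitely many) $\rhp$ zeros is absorbed into $m_{R'}$, which is then an infinite Blaschke product --- its convergence automatic since $R'\in\Hi$ --- so $m_n$ is infinite-dimensional even when $h_1=\tau_1$.

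The main obstacle, as indicated above, is making the inner--outer factorizations of the infinite-dimensional $R'$ and $T'$ rigorous: one must argue that an $F$-system with smallest delay zero and no imaginary-axis zeros factors as (finite Blaschke)$\,\times\,$(genuine outer function) with no residual singular or delay inner part, and, in case (i), that sharing a boundary modulus with the $F$-system $\bar R$ both confines $R'$'s unstable zeros to a convergent Blaschke product and leaves an outer remainder of the type admitted by the Skew-Toeplitz theory of \cite{TO95}. The delay bookkeeping and the appeals to $A.1$ guaranteeing $h_1-\tau_1\geq 0$ (respectively $>0$) are routine by comparison.
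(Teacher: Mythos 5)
Your proposal is correct and is essentially the same argument the paper relies on: the corollary is established there by exhibiting the explicit factors in (\ref{eq:IFfactorization})--(\ref{eq:FFfactorization}), and your construction reproduces them exactly, with your $m_{R'}$, $N_{R'}$ and $B_{T'}$ coinciding with the paper's $M_{\bar R}\,(e^{h_1 s}R)/\bar R$, $\bar R/M_{\bar R}$ and $M_T$, built on the same conjugate-system device $|\bar R(j\omega)|=|R(j\omega)|$ and on the delay factor $e^{-(h_1-\tau_1)s}$ when $h_1>\tau_1$. The only difference is presentational: you derive the inner--outer pieces abstractly (and flag the absence of singular/delay inner parts, which your exponent comparison and analyticity on the axis indeed rule out), whereas the paper states the factors in closed form and cites \cite{GO-IFAC-06} for the details.
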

%{\em Proof}. The assumption $A.3$ implies that $T$ is a
%$F$-system. It is possible to satisfy assumption $A.4$ when $R$
%and $T$ are $I$ or $F$-system. When $R$ and $T$ are $F$-system,
%the condition $h_1>\tau_1$ is necessary for infinite dimensional
%$m_n$ (otherwise, $m_n$ is finite dimensional and stable $\Hi$
%controller problem can be reduced into finite dimensional plant
%case). For further details see . \qed

%Note that By Corollary~\ref{cor:A3A4}, it is easy to check the
%assumptions $A.3$-$A.4$. After transfer function is obtained in
%the form of (\ref{eq:preplant}), the plant can be written as
%(\ref{eq:seqplant}). By Lemma \ref{lemma:finitezeros}, we can
%check whether $R$ and $T$ satisfies Corollary \ref{cor:A3A4}.
In \cite{GO-IFAC-06}, it is shown that the plant factorization
(\ref{eq:seqplant})  can be done as (\ref{eq:mnfac}) when
\begin{itemize}
    \item[(i)] $R$ is an $I$-system and $T$ is an $F$-system,
    \bea \label{eq:IFfactorization} \nonumber
m_n&=&e^{-(h_1-\tau_1)s}M_{\bar{R}} \frac{(e^{h_1
s}R)}{\bar{R}}, \\
\nonumber m_d&=&M_T, \\
N_o&=&\frac{\bar{R}}{M_{\bar{R}}} \frac{M_T}{(e^{\tau_1 s}T)}, \eea
    \item[(ii)] $R$ and $T$ are $F$-systems with $h_1>\tau_1$,
        \bea \label{eq:FFfactorization} \nonumber
m_n&=&e^{-(h_1-\tau_1)s}M_R, \\
\nonumber m_d&=&M_T, \\
N_o&=&\frac{R}{M_R} \frac{M_T}{(e^{\tau_1 s}T)} \eea
\end{itemize}

\noindent where $M_R$ and $M_{\bar{R}}$ are inner functions whose
zeros are the $\rhp$ zeros of $R$ and $\bar{R}$ respectively. When
$R$ is an $I$-system, conjugate of $R$  has finitely many unstable
zeros, so $M_{\bar{R}}$ is well-defined. Similarly, zeros of $M_T$
are unstable zeros of $T$. Note that $m_n$ and $m_d$ are inner
functions, infinite and finite dimensional respectively. The
function $N_o$ is outer. By (\ref{eq:FFfactorization}), one can see
that the condition $h_1>\tau_1$ is necessary for $m_n$ to be a
causal and infinite dimensional system. For further details, see
\cite{GO-IFAC-06}.

\section{Structure of $\Hi$ Controllers} \label{sec:Cstruct}

Assume that the problem data in (\ref{eq:wsm}) satisfies that $W_1$
is non-constant function and $(W_2 N_o), (W_2 N_o)^{-1}\in\Hi$, then
the optimal $\Hi$ controller can be written as, \cite{TO95}, \be
\label{eq:Copt} C_{opt}=E_{\gamma_{opt}}
m_d\frac{N_o^{-1}F_{\gamma_{opt}}L}{1+m_nF_{\gamma_{opt}}L} \ee
where $E_{\gamma}=\left(\frac{W_1(-s)W_1(s)}{\gamma^2}-1\right)$,
and for the definition of the other terms, let the right half plane
zeros of $E_\gamma$ be $\beta_i$, $i=1,\ldots,n_1$, the right half
plane poles of $P$ be $\alpha_i$, $i=1,\ldots,\ell$ and that of
$W_1(-s)$ be $\eta_i$ for $i=1,\ldots,n_1$. Then,
$F_{\gamma}(s)=G_{\gamma}(s)\prod_{i=1}^{n_1}\frac{s-\eta_i}{s+\eta_i}$
where \bea \label{eq:spectralfactorization}
G_{\gamma}(s)G_{\gamma}(-s)
=\left(1-\left(\frac{W_2(-s)W_2(s)}{\gamma^2}-1\right)E_\gamma
\right)^{-1} \eea and $G_\gamma\in\Hi$ is outer function. The
rational function $L=\frac{L_2}{L_1}$ , $L_1$ and $L_2$ are
polynomials with degrees less than or equal to $(n_1+\ell-1)$ and
they are determined by the following interpolation conditions, \bea
\label{eq:interpcond}
0&=&L_1(\beta_i)+m_n(\beta_i)F_{\gamma}(\beta_i)L_2(\beta_i),\\
\nonumber 0&=&L_1(\alpha_k)+m_n(\alpha_k)F_{\gamma}(\alpha_k)L_2(\alpha_k),\\
\nonumber 0&=&L_2(-\beta_i)+m_n(\beta_i)F_{\gamma}(\beta_i)L_1(-\beta_i),\\
\nonumber
0&=&L_2(-\alpha_k)+m_n(\alpha_k)F_{\gamma}(\alpha_k)L_1(-\alpha_k)
\eea for $i=1,\ldots,n_1$ and $k=1,\ldots,\ell$. The optimal
performance level, $\gamma_{opt}$, is the largest $\gamma$ value
such that spectral factorization (\ref{eq:spectralfactorization})
exists and interpolation conditions (\ref{eq:interpcond}) are
satisfied.

Similarly, all suboptimal controllers achieving the performance
level $\rho>\gamma_{opt}$ can be written as, \cite{TO95}, \be
\label{eq:Csubopt} C_{subopt}=E_{\rho}
m_d\frac{N_o^{-1}F_{\rho}L_U}{1+m_nF_{\rho}L_U} \ee where
$\rho>\gamma_{opt}$ and
$L_U(s)=\frac{L_{2U}}{L_{1U}}=\frac{L_2(s)+L_1(-s)U(s)}{L_1(s)+L_2(-s)U(s)}$
with $U\in\Hi$, $\|U\|_\infty\leq1$. The polynomials, $L_1$ and
$L_2$, have degrees less than or equal to $n_1+\ell$. Same
interpolation conditions (\ref{eq:interpcond}) are valid with $\rho$
replacing $\gamma$. Moreover, there are two additional conditions on
$L_1$ and $L_2$: \bea
\nonumber 0&=&L_2(-a)+(E_\rho(a)+1)F_\rho(a)m_n(a)L_1(-a) \\
\nonumber 0&\neq&L_1(-a) \eea where $a\in\Rp$ is arbitrary.

Note that the $\rhp$ zeros of $E_{\gamma_{opt}}$ and $m_d$ are
always cancelled by the denominator in (\ref{eq:Copt}). Therefore,
$C_{opt}$ is stable if and only if the denominator in
(\ref{eq:Copt}) has no zeros in $\rhp$ except the zeros of
$E_{\gamma_{opt}}$ and $m_d$ in $\rhp$ (multiplicities considered).
Same conclusion is valid for the suboptimal case.
%i.e. $C_{subopt}$ is stable provided that the denominator in
%(\ref{eq:Csubopt}) has zeros in $\rhp$ only at the $\rhp$ zeros of
%$E_{\rho}$ and $m_d$ (again, multiplicities considered).

\begin{lemma} \label{lemma:infcond}
Let the plant  (\ref{eq:seqplant}) satisfy $A.1$-$A.4$. The optimal
controller for the mixed sensitivity problem (\ref{eq:wsm}),  and
respectively a suboptimal controller with finite dimensional $U$,
have infinitely many poles in $\rhp$ if and only if the following
inequalities hold respectively, \bea \label{eq:inequalitiesinfcond}
\nonumber \lim_{\omega\rightarrow\infty}
|F_{\gamma_{opt}}(j\omega)L_{opt}(j\omega)|
&\geq&1  \\
\lim_{\omega\rightarrow\infty}|F_{\rho}(j\omega)L_U(j\omega)|&\geq&1.
\eea
\end{lemma}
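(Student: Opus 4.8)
The plan is to analyze the location of the right half plane poles of the controller by examining when the denominator $1 + m_n F_\gamma L$ (resp. $1 + m_n F_\rho L_U$) in the controller formulas \eqref{eq:Copt}–\eqref{eq:Csubopt} can have infinitely many zeros in $\rhp$. By the remark just before the statement, $C_{opt}$ is stable iff this denominator has no $\rhp$ zeros beyond those of $E_{\gamma_{opt}}$ and $m_d$; more generally, the number of $\rhp$ poles of the controller is governed by the $\rhp$ zeros of $1+m_n F_\gamma L$. Since $F_\gamma L$ is a finite-dimensional (rational) transfer function and $m_n$ is inner and infinite-dimensional, the function $D(s) := 1 + m_n(s) F_\gamma(s) L(s)$ is a quasi-polynomial-type entire-ish function whose zero distribution is dictated by the interplay of the exponential factor inside $m_n$ and the rational part.

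First I would write $m_n(s) = e^{-\tau s} m_0(s)$ where $\tau > 0$ is the (nonzero, by the factorizations \eqref{eq:IFfactorization}–\eqref{eq:FFfactorization}, where $\tau = h_1 - \tau_1 > 0$ in case (ii) and similarly in case (i)) delay and $m_0$ is a bounded rational-type inner factor (a finite Blaschke product), so that $|m_0(j\omega)| = 1$ and $m_0(j\omega)$ has a well-defined limit of modulus one as $\omega \to \infty$. Then $D(s) = 1 + e^{-\tau s} m_0(s) F_\gamma(s) L(s)$, which is a \emph{retarded/neutral quasi-polynomial}. The key step is the standard fact from the theory of such functions: $D$ has infinitely many zeros in $\rhp$ if and only if the modulus of the rational coefficient of the exponential term, evaluated along the imaginary axis as $|\omega| \to \infty$, is $\geq 1$; that is, iff $\lim_{\omega\to\infty} |e^{-\tau j\omega} m_0(j\omega) F_\gamma(j\omega) L(j\omega)| = \lim_{\omega\to\infty}|F_\gamma(j\omega)L(j\omega)| \geq 1$, using $|e^{-\tau j\omega}| = |m_0(j\omega)| = 1$. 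When this high-frequency gain is strictly less than one, a Rouché/argument-principle estimate on large right-half-plane contours shows $D$ agrees (in zero count) with $1 + (\text{small})$ and hence has only finitely many zeros in $\rhp$; when it exceeds one, the chain of zeros asymptotic to a vertical line in $\rhp$ forces infinitely many; the boundary case $=1$ requires the more delicate neutral-type analysis, which is where one invokes the precise statement of A.1(c) guaranteeing the high-frequency behaviour is genuinely neutral and the chain of zeros does not drift into the open left half plane.

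Concretely, the steps in order are: (1) substitute the explicit factorization of $m_n$ from Section \ref{sec:Pfact} to exhibit the exponential factor and show $|m_n(j\omega)| = 1$ with the rational remainder $m_0$ having a unimodular limit; (2) record that $F_\gamma$ is biproper rational (its Blaschke-type factors $\prod (s-\eta_i)/(s+\eta_i)$ are unimodular on the axis and $G_\gamma$ is outer with finite nonzero high-frequency gain by \eqref{eq:spectralfactorization}) and $L = L_2/L_1$ is rational, so $\lim_{\omega\to\infty}|F_\gamma(j\omega)L(j\omega)|$ exists in $[0,\infty]$; (3) apply the quasi-polynomial zero-counting result to $D(s) = L_1(s) + m_n(s)F_\gamma(s)L_2(s)$ (clearing denominators, noting $L_1$ is a polynomial so contributes no $\rhp$ poles) to conclude the dichotomy; (4) subtract off the known finitely many cancelled zeros (those of $E_{\gamma_{opt}}$, $m_d$) which does not affect finiteness; (5) repeat verbatim for the suboptimal case with $\rho$, $L_U$, noting $L_U$ is rational precisely because $U$ is taken finite dimensional, so the same argument applies.

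\textbf{The main obstacle} I expect is the neutral/critical case, i.e.\ making rigorous why the limit $\geq 1$ (rather than $>1$) is the exact threshold: for neutral-type quasi-polynomials the chain of zeros with $\limsup$ real part could in principle asymptote to the imaginary axis from the left, giving only finitely many strictly-$\rhp$ zeros despite the high-frequency gain equalling one. Handling this cleanly requires the assumption A.1(c) (the degree and delay conditions on $r_{p,i_{max}}$, $t_{p,j_{max}}$) which pins down the leading asymptotic term of $D$ and ensures the neutral chain lies in the closed right half plane; one then appeals to a result on the location of zeros of neutral quasi-polynomials (e.g.\ along the lines of the analysis in \cite{GO-IFAC-06} used for Lemma \ref{lemma:finitezeros}) to push the chain into $\rhp$. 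The remaining steps are routine once this threshold computation is in place.
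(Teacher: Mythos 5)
Your route is genuinely different from the paper's. You argue through quasi-polynomial zero-chain asymptotics: write $m_n=e^{-\tau s}m_0$, clear denominators to get $D=L_1+m_nF_\gamma L_2$, use Rouch\'e on large right half plane contours when the high-frequency gain is $<1$ and neutral chain location when it is $>1$. The paper instead uses a Nyquist argument: on a right half plane contour excluding the $\rhp$ zeros of $E_{\gamma_{opt}}$ (resp.\ $E_\rho$) and $m_d$, the function $1+m_nF L$ has only the finitely many poles of $L_{opt}$ (resp.\ $L_U$), so it has infinitely many $\rhp$ zeros iff the Nyquist plot of $m_nF L$ encircles $-1$ infinitely many times; since $|m_n(j\omega)|=1$, this is reduced to $\lim_{\omega\to\infty}|F L|\geq 1$ together with the separately verified fact that $m_n$ winds around the origin infinitely many times in each of the two factorization cases (\ref{eq:IFfactorization}) and (\ref{eq:FFfactorization}). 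Both proofs turn on the same high-frequency modulus threshold, but the Nyquist phrasing avoids any explicit analysis of where the zero chains lie.

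There are two concrete gaps in your version. First, your structural assumption that $m_0$ is a finite Blaschke product holds only in case (ii) ($R$ and $T$ both $F$-systems). In case (i), $m_n=e^{-(h_1-\tau_1)s}M_{\bar R}\,(e^{h_1 s}R)/\bar R$ is infinite dimensional beyond the pure delay, so after clearing denominators $D$ is not the retarded/neutral quasi-polynomial $1+e^{-\tau s}(\mathrm{rational})$ your "standard fact" addresses; what survives is only $|m_n(j\omega)|=1$ and the unbounded phase winding, which is precisely what the paper exploits. You would need to redo the chain analysis for this ratio-of-quasi-polynomials structure (or argue that multiplying through by the outer part of $m_n$ introduces no $\rhp$ zeros) -- this is fixable but currently missing. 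Second, the equality case $\lim|F L|=1$, which is part of the asserted ``$\geq$'' threshold, is left open: you flag it as the main obstacle, but the proposed remedy via A.1(c) cannot work as stated, because A.1(c) constrains only the plant, whereas whether a neutral chain at the critical threshold enters the open right half plane depends on the lower-order behaviour of $F_\rho L_U$, i.e.\ on the weights and interpolation data as well. So the ``if'' direction at equality is not established by your argument. (The paper's own Nyquist step is also terse at this boundary, but it does not rest on the unproved chain-location claim you invoke.)
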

\begin{proof}
The optimal (respectively suboptimal) controller has
infinitely many  poles in $\rhp$ if and only if the equations \bea
\nonumber \label{eq:1mnFL} 1+m_n(s)F_{\gamma_{opt}}(s)L_{opt}(s)
&=&0 \quad\textrm{respectively}, \\
 \quad 1+m_n(s)F_{\rho}(s)L_U(s)&=&0.
\eea have infinitely many roots in $\rhp$. Assume that the Nyquist
contour in right-half plane is chosen such that the $\rhp$ zeros of
$E_{\gamma_{opt}}$ (resp. $E_\rho$) and $m_d$ are excluded. The
unstable poles of the term (\ref{eq:1mnFL}) are the unstable poles
of $L_{opt}$ (resp. $L_U$) which are finitely many (note that $L_2$,
$L_1$ and $U$ are finite dimensional). Using Nyquist theorem, we can
conclude that the term (\ref{eq:1mnFL}) has infinitely many zeros in
$\rhp$ if and only if Nyquist plot of $m_n F_{\gamma_{opt}}L_{opt}$
(resp. $m_nF_\rho L_U$) encircles $-1$ infinitely many times. This
is equivalent to the following conditions: \bea \nonumber
\lim_{\omega\rightarrow\infty}
|F_{\gamma_{opt}}(j\omega)L_{opt}(j\omega)|
&\geq&1  \quad\textrm{respectively}, \\
\nonumber \lim_{\omega\rightarrow\infty}
|F_{\rho}(j\omega)L_U(j\omega)| &\geq&1 \eea and $m_n$ encircles the
origin infinitely many times. When $R$ is an $I$-system and $T$ is
an $F$-system, $m_n$ has infinitely many zeros in $\rhp$ and no
poles in $\rhp$, so it encircles the origin infinitely many times.
On the other hand, when $R$ and $T$ are $F$-systems with
$h_1>\tau_1$, we have $m_n=e^{-(h_1-\tau_1)s}M_R$ (where $M_R$ is
finite dimensional), so $m_n$ encircles the origin infinitely times
due to the delay term. Therefore, the inequalities are necessary and
sufficient conditions for controller to have infinitely many
unstable poles.
\end{proof}

The following result gives a necessary and sufficient condition for
a suboptimal controller to have finitely many unstable poles.

\begin{corollary} \label{cor:infitycond}
Let the plant (\ref{eq:seqplant}) satisfy  $A.1$-$A.4$. Assume that
the optimal controller of mixed sensitivity problem (\ref{eq:wsm})
has infinitely many unstable poles. When $U$ is finite dimensional,
the suboptimal controller has finitely many unstable poles if and
only if \be \label{eq:inftycond}
\lim_{\omega\rightarrow\infty}|F_\rho(j\omega)L_U(j\omega)|<1 \ee
\end{corollary}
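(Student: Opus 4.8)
The plan is to read this off as the logical complement of the suboptimal half of Lemma~\ref{lemma:infcond}. Recall from the discussion preceding that lemma that, once the automatic cancellations of the $\rhp$ zeros of $E_\rho$ and $m_d$ are taken into account, the $\rhp$ poles of $C_{subopt}$ are precisely the $\rhp$ zeros of $1+m_n(s)F_\rho(s)L_U(s)$ that are not $\rhp$ zeros of $E_\rho$ or $m_d$; since the latter two contribute only finitely many $\rhp$ zeros, $C_{subopt}$ has finitely many $\rhp$ poles if and only if $1+m_nF_\rho L_U$ has finitely many $\rhp$ zeros. The $\rhp$ zero set of $1+m_nF_\rho L_U$ is either finite or countably infinite --- it has no accumulation point inside $\rhp$ --- so ``$C_{subopt}$ has finitely many $\rhp$ poles'' is exactly the negation of ``$C_{subopt}$ has infinitely many $\rhp$ poles.''

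With this dichotomy in hand I would simply invoke Lemma~\ref{lemma:infcond}. Since $U$ is finite dimensional and the plant (\ref{eq:seqplant}) satisfies $A.1$--$A.4$, that lemma states that $C_{subopt}$ has infinitely many $\rhp$ poles if and only if $\lim_{\omega\rightarrow\infty}|F_\rho(j\omega)L_U(j\omega)|\geq 1$. Negating both sides of this equivalence gives precisely (\ref{eq:inftycond}). The standing hypothesis that the optimal controller has infinitely many $\rhp$ poles --- equivalently, by the same lemma, $\lim_{\omega\rightarrow\infty}|F_{\gamma_{opt}}(j\omega)L_{opt}(j\omega)|\geq 1$ --- is what makes the statement non-trivial and delineates the regime of interest, but it plays no further role in the argument.

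The only point I would pause on is that the number $\lim_{\omega\rightarrow\infty}|F_\rho(j\omega)L_U(j\omega)|$ is well defined. Because $W_1$ and $W_2$ are finite dimensional, $E_\rho$ and the right-hand side of the spectral factorization (\ref{eq:spectralfactorization}) are rational, hence $G_\rho$ and therefore $F_\rho$ are rational; with $U$ finite dimensional, $L_U=(L_2+L_1(-s)U)/(L_1+L_2(-s)U)$ is rational as well, so $F_\rho L_U$ is a rational function whose modulus along the imaginary axis has a limit in $[0,+\infty]$ as $\omega\rightarrow\infty$ (with the value $+\infty$, if it occurs, falling under the ``$\geq 1$'' case). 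I expect this to be the only mildly technical point; the substance of the proof is the one-line contrapositive of Lemma~\ref{lemma:infcond}.
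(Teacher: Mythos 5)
Your proposal is correct and matches the paper's intent exactly: the corollary is stated as a direct consequence of Lemma~\ref{lemma:infcond}, obtained by negating the suboptimal half of that equivalence, which is precisely your argument. The additional remarks on the finite/infinite dichotomy and on the existence of the limit (rationality of $F_\rho L_U$ for finite dimensional $U$) are harmless and only make explicit what the paper leaves implicit.
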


When the optimal controller has infinitely many unstable poles, a
stable suboptimal controller may be found by proper selection of the
free parameter $U$. In Section \ref{sec:infinitelymany}, this case
is considered.

When $F_{\gamma_{opt}}$ is strictly proper, then the optimal and
suboptimal controllers always have finitely many unstable poles.
Existence condition for strictly proper $F_{\gamma_{opt}}$ and
stable suboptimal $\Hi$ controller design for this case is given in
Section \ref{sec:finitelymany}.

\section{Stable Suboptimal $\Hi$ Controller Design when the Optimal
Controller has Infinitely Many Poles in $\rhp$}
\label{sec:infinitelymany}

Corollary \ref{cor:infitycond} gives a condition on the problem data
so that the suboptimal $\Hi$ controller (which is uniquely
determined by $U$) has finitely many poles in $\rhp$. This condition
will be used to determine a parameter range of $U$. Assume that
$U(s)$ is finite dimensional and bi-proper, and define \bea
\nonumber
f_\infty&:=&\lim_{\omega\rightarrow\infty}|F_\rho(j\omega)|>0, \\
\nonumber u_\infty&:=&\lim_{\omega\rightarrow\infty}U(j\omega)
\quad\textrm{and} \quad u_\infty\in[-1,1], \\
\nonumber
k&:=&\lim_{\omega\rightarrow\infty}\frac{L_{2}(j\omega)}{L_{1}(j\omega)}.
\eea
\begin{lemma} \label{eq:inftyvalues}
Consider the set of suboptimal controllers for the plant
(\ref{eq:seqplant}) with a given $\Hi$ performance level
$\rho>\gamma_{opt}$. This set contains an element with finitely many
poles in $\rhp$ if and only if one of the following conditions is
satisfied: (i) $|k|<1$, or (ii) $|k|\geq1$ and $f_\infty<1$. The
corresponding intervals for $u_\infty$ resulting a suboptimal
controller with finitely many $\rhp$ poles are
\begin{itemize}
    \item[] (i) $(-1)^{n_1+\ell}u_\infty\in\left[-1,1\right]\bigcap
    \left(-\frac{1+f_\infty k}{f_\infty+k},\frac{1-f_\infty
    k}{|f_\infty-k|}\right)$, when $|k|<1$,
    \item[] (ii) $(-1)^{n_1+\ell}u_\infty\in\left[-1,-\frac{1+f_\infty
k}{f_\infty+k}\right)\bigcup\left(\frac{1-f_\infty
    k}{|f_\infty-k|},1\right]$ when $|k|>1$ and $f_\infty<1$
    and $u_\infty\in[-1,1]$ when $|k|=1$ and $f_\infty<1$,
\end{itemize}
where $n_1$ is the dimension of the sensitivity weight $W_1$ and
$\ell$ is the number of $\rhp$  poles of the plant (\ref{eq:mnfac}).
\end{lemma}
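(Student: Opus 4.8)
The plan is to reduce the statement to a concrete computation of the limiting magnitude $\lim_{\omega\to\infty}|F_\rho(j\omega)L_U(j\omega)|$ and then invoke Corollary~\ref{cor:infitycond}. First I would recall from the definition of $L_U$ that
\[
L_U(s)=\frac{L_2(s)+L_1(-s)U(s)}{L_1(s)+L_2(-s)U(s)},
\]
so that, dividing numerator and denominator by $L_1(j\omega)$ and using the fact that $L_1,L_2$ have equal degree bound $n_1+\ell$, the quantity $L_2(j\omega)/L_1(j\omega)\to k$ while $L_1(-j\omega)/L_1(j\omega)\to(-1)^{n_1+\ell}$ (the leading coefficients being real). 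Hence, writing $v:=(-1)^{n_1+\ell}u_\infty$, one gets
\[
\lim_{\omega\to\infty}L_U(j\omega)=\frac{k+v}{1+kv},
\]
provided $1+kv\neq0$. Multiplying by $f_\infty=\lim_{\omega\to\infty}|F_\rho(j\omega)|$ yields
\[
\lim_{\omega\to\infty}|F_\rho(j\omega)L_U(j\omega)|
= f_\infty\,\left|\frac{k+v}{1+kv}\right|.
\]

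By Corollary~\ref{cor:infitycond}, the suboptimal controller has finitely many unstable poles if and only if this limit is strictly less than $1$, i.e.\ if and only if $f_\infty|k+v|<|1+kv|$. The next step is to solve this inequality for $v\in[-1,1]$. Squaring (legitimate since both sides are nonnegative) gives $f_\infty^2(k+v)^2<(1+kv)^2$, which rearranges to a quadratic inequality in $v$ with leading coefficient $(k^2-f_\infty^2)$ — note here $f_\infty>0$ is used so the degenerate linear case is handled separately. One then factors: the inequality $f_\infty^2(k+v)^2<(1+kv)^2$ is equivalent to $\bigl(f_\infty(k+v)-(1+kv)\bigr)\bigl(f_\infty(k+v)+(1+kv)\bigr)<0$, i.e.
\[
\bigl((f_\infty-k)v+(f_\infty k-1)\bigr)\bigl((f_\infty+k)v+(f_\infty k+1)\bigr)<0 .
\]
Analyzing the signs of the two linear factors, whose roots are $v=\dfrac{1-f_\infty k}{f_\infty-k}$ and $v=-\dfrac{1+f_\infty k}{f_\infty+k}$, produces exactly the intervals in the statement: when $|k|<1$ the product of the leading coefficients $(f_\infty-k)(f_\infty+k)=f_\infty^2-k^2$ can be either sign but the feasible $v$-set is the open interval between the two roots intersected with $[-1,1]$, giving case~(i); when $|k|>1$ the feasible set is the complement of the interval between the roots, giving the two half-open pieces in case~(ii); and when $|k|=1$ one linear factor drops out and the inequality reduces to $f_\infty<1$ holding for all $v$, matching the last clause. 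Finally one translates back from $v$ to $u_\infty$ via $v=(-1)^{n_1+\ell}u_\infty$, and observes that if $|k|\ge1$ and $f_\infty\ge1$ the set is empty, which together with Lemma~\ref{lemma:infcond} gives the ``only if'' direction of the existence claim.

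I expect the main obstacle to be bookkeeping rather than conceptual: one must be careful (a) that the limit of $L_U(j\omega)$ exists and equals $(k+v)/(1+kv)$ even in the boundary situations $u_\infty=\pm1$ or $1+kv=0$ — in the latter case $|L_U(j\omega)|\to\infty$ and the controller trivially has infinitely many poles, so such $v$ is correctly excluded; (b) that the absolute values in the interval endpoints, in particular the $|f_\infty-k|$ in the denominator, come out correctly when one divides the quadratic inequality by the sign of its leading coefficient; and (c) that the intersection with $[-1,1]$ is taken at the end, since $u_\infty$ is a priori constrained to that range by $\|U\|_\infty\le1$. A short check of the extreme sub-cases ($k=0$, $f_\infty$ small, etc.) confirms the formulas, and no deeper machinery beyond the already-established Corollary~\ref{cor:infitycond} is needed.
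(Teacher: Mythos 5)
Your proposal follows essentially the same route as the paper's proof: after computing the high-frequency limit of $L_U(j\omega)$ you invoke Corollary~\ref{cor:infitycond} (equivalently Lemma~\ref{lemma:infcond}) to reduce the question to the inequality $-\frac{1}{f_\infty}<\frac{k+\tilde u_\infty}{1+k\tilde u_\infty}<\frac{1}{f_\infty}$ with $\tilde u_\infty=(-1)^{n_1+\ell}u_\infty\in[-1,1]$, which is exactly the inequality the paper starts from, and your factor-and-sign analysis is just an explicit version of what the paper dismisses as ``after algebraic manipulations.'' Your write-up is correct and, if anything, more detailed than the paper's (the limit of $L_U$, the factorization into linear factors, and the caveats about $1+k\tilde u_\infty=0$ and the absolute-value bookkeeping in the endpoints are all spelled out), so it matches the paper's argument in substance.
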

\begin{proof}
Using Lemma \ref{lemma:infcond}, there exists suboptimal controller
with finitely many unstable poles if and only if the following
inequality is satisfied, \bd -\frac{1}{f_\infty}<
\frac{k+\tilde{u}_\infty}{1+k\tilde{u}_\infty}<\frac{1}{f_\infty},
\ed where $\tilde{u}_\infty=(-1)^{n_1+\ell}u_\infty$ and
$\tilde{u}_\infty\in[-1,1]$. After algebraic manipulations, one can
see that the admissible $\tilde{u}_\infty$ intervals are
\begin{itemize}
    \item[(a)] $\tilde{u}_\infty\in
    \left(-\frac{1+f_\infty k}{f_\infty+k},\frac{1-f_\infty
    k}{|f_\infty-k|}\right)$ when $f_\infty\geq1$ and $|k|<1$,
    \item[(b)] $\tilde{u}_\infty\in[-1,1]$ when  $f_\infty<1$ and
    $|k|<1$,
    \item[(c)] $\tilde{u}_\infty\in
    \left[-1,-\frac{1+f_\infty k}{f_\infty+k}\right)
    \bigcup\left(\frac{1-f_\infty k}{|f_\infty-k|},1\right]$
    when $|k|>1$ and $f_\infty<1$,
    \item[(d)] $\tilde{u}_\infty\in[-1,1]$ when $|k|=1$ and
    $f_\infty<1$.
\end{itemize} The intervals for admissible  $u_\infty$ in
$(i)$ and $(ii)$ are the results of (a-b) and (c-d) respectively.
This result is a generalized version of a similar result we
presented in \cite{GO-MTNS-04}.
\end{proof}

Note that $u_\infty$ is a design parameter and a valid range to have
a stable $\Hi$ controller can be calculated by $f_\infty$ and $k$.

\begin{theorem} \label{eq:thmss}
Let the plant (\ref{eq:seqplant}) satisfy  $A.1$-$A.4$. Assume that
the optimal and the central suboptimal (for $\rho>\gamma_{opt}$)
controllers determined from the mixed sensitivity problem have
infinitely many unstable poles. If there exists $U\in\Hi$,
$\|U\|_\infty<1$ such that $L_{1U}$ has no $\rhp$ zeros and \be
\label{eq:magcond} |L_U(j\omega)F_\rho(j\omega)|<1, \quad \forall \;
\omega\in[0,\infty), \ee then the suboptimal controller is stable.
\end{theorem}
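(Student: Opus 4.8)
The plan is to show that, under the stated hypotheses, the characteristic equation $1+m_n(s)F_\rho(s)L_U(s)=0$ has no zeros in the closed right half plane, which by the observation following \eqref{eq:Csubopt} is exactly what is needed for $C_{subopt}$ to be stable. Recall that the numerator of $C_{subopt}$ contributes the $\rhp$ zeros of $E_\rho$ and $m_d$, and these are automatically cancelled; so the only possible $\rhp$ poles of $C_{subopt}$ come from the remaining zeros of $1+m_nF_\rho L_U$. Writing $L_U=L_{2U}/L_{1U}$, the equation becomes $L_{1U}(s)+m_n(s)F_\rho(s)L_{2U}(s)=0$, so I must rule out $\rhp$ zeros of this quasi-polynomial-type expression.

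First I would argue that there are no $\rhp$ zeros away from the hypothesized cancellations by a Nyquist/argument-principle count, exactly parallel to the proof of Lemma \ref{lemma:infcond}. Choose the Nyquist contour in $\rhp$ indented to exclude the $\rhp$ zeros of $E_\rho$ and of $m_d$; on this contour $m_nF_\rho L_U$ is analytic except at the finitely many $\rhp$ poles of $L_U$ (which are finitely many since $L_1,L_2,U$ are finite dimensional, and $U\in\Hi$ contributes none). The number of $\rhp$ zeros of $1+m_nF_\rho L_U$ equals the number of its $\rhp$ poles plus the winding number of the Nyquist plot of $m_nF_\rho L_U$ around $-1$. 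The hypothesis \eqref{eq:magcond}, $|L_U(j\omega)F_\rho(j\omega)|<1$ for all $\omega\in[0,\infty)$, together with $|m_n(j\omega)|=1$ (inner), forces $|m_n(j\omega)F_\rho(j\omega)L_U(j\omega)|<1$ on the imaginary axis, so the Nyquist plot stays strictly inside the unit disc and therefore cannot encircle $-1$: the winding number is zero. Hence the number of $\rhp$ zeros of $1+m_nF_\rho L_U$ equals the number of $\rhp$ poles of $L_U$, i.e. the $\rhp$ zeros of $L_{1U}$. By hypothesis $L_{1U}$ has no $\rhp$ zeros, so $1+m_nF_\rho L_U$ has no $\rhp$ zeros at all (on the indented contour), and in particular none except the excluded points; feeding this back through \eqref{eq:Csubopt}, $C_{subopt}$ has no $\rhp$ poles.

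The two points needing care, which I would treat as the main obstacles, are: (a) the behaviour on the large semicircular arc $|s|=\mathcal{R}\to\infty$ of the contour — one must check that $|m_nF_\rho L_U|$ does not approach or exceed $1$ there in a way that creates spurious encirclements; since $m_n$ is inner (hence $|m_n|\le 1$ in $\rhp$ and $|m_n|\to$ a finite value along the arc) and $F_\rho L_U$ is proper, the relevant limiting value is governed by $\lim_{\omega\to\infty}|F_\rho(j\omega)L_U(j\omega)|$, which is $<1$ by Corollary \ref{cor:infitycond}-type reasoning, so the strict inequality \eqref{eq:magcond} is in fact consistent all the way to infinity and the arc contributes nothing; and (b) justifying the argument-principle bookkeeping for a transcendental function $m_n$ with infinitely many zeros in $\rhp$ — here one invokes the same generalized Nyquist/Cauchy argument used in Lemma \ref{lemma:infcond}, noting that although $m_n$ has infinitely many $\rhp$ zeros, its imaginary-axis modulus is $1$ so it contributes a bounded-modulus factor and the product $m_nF_\rho L_U$ stays strictly inside the unit circle on the whole (indented, closed) contour, which suffices to conclude the winding number is zero without a literal zero count of $m_n$. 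Assembling (a), (b), and the Nyquist count gives the theorem.
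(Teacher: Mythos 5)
Your argument is correct in substance, but it takes a genuinely different route from the paper. The paper's proof is a two-line maximum modulus argument: since $L_{1U}$ has no $\rhp$ zeros, $m_nF_\rho L_U\in\Hi$, and because $|m_n(j\omega)|\le 1$ the boundary condition (\ref{eq:magcond}) bounds its modulus in the open right half plane strictly below $1$, so $1+m_nF_\rho L_U$ cannot vanish there and $C_{subopt}$ has no $\rhp$ poles. You instead run a Nyquist/argument-principle count in the spirit of Lemma \ref{lemma:infcond}: no encirclements of $-1$ because the plot stays inside the unit disc, and no open-loop $\rhp$ poles because $L_{1U}$ has no $\rhp$ zeros, hence no closed-loop $\rhp$ zeros. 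Your route has the merit of connecting the theorem to Lemma \ref{lemma:infcond} and to step 6 of the algorithm (where an encirclement count is used precisely when (\ref{eq:magcond}) fails), but it is more delicate: it forces you to control the large semicircular arc and the infinitely winding factor $m_n$, issues the maximum modulus argument avoids entirely. On that point your item (a) is slightly circular as written: Corollary \ref{cor:infitycond} is an equivalence about $\lim_{\omega\to\infty}|F_\rho(j\omega)L_U(j\omega)|$, not a consequence of (\ref{eq:magcond}), which only gives strict inequality at finite frequencies; to close the arc estimate you should either note explicitly that the limit at infinity must also be $<1$ (as the admissible-$u_\infty$ analysis of Lemma \ref{eq:inftyvalues} arranges, and as the paper implicitly assumes), or simply replace the arc discussion by the maximum modulus bound, after which your bookkeeping goes through.
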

\begin{proof}
Assume that there exists $U$ satisfying the conditions of the
theorem. By maximum modulus theorem, \bd
|1+m_n(s_o)F_\rho(s_o)L_U(s_o)|>1-|F_\rho(j\omega)L_U(j\omega)|>0,
\ed therefore, there is no unstable zero, $s_o=\sigma+j\omega$ with
$\sigma>0$. The suboptimal controller has no unstable poles.
\end{proof}

Note that Theorem \ref{eq:thmss} is a conservative result and the
level of conservatism can be analyzed case by case with examples.
Although the inequality (\ref{eq:magcond}) is not satisfied, the
term $(1+m_n F_\rho L_U)^{-1}$ can stable. It is difficult to
characterize all $U$ which makes $(1+m_nF_\rho L_U)^{-1}$ stable.
Therefore, the following algorithm tries to find stable controllers
even if the inequality is not satisfied by choosing suitable
$\omega_{max}$ and $\eta_{max}$.

The theorem does not give a systematic method for calculating $U$
which results in a stable $\Hi$ controller.
%Second, since Theorem
%\ref{eq:thmss} decides only whether the controller has infinitely
%or finitely many unstable poles, there may exist stable suboptimal
%controllers even when the inequality condition in the theorem is
%violated. It is difficult to reveal the first problem, therefore
In order to  address this issue, at least partially, we will
consider the use of first order bi-proper $U$.
%For the second problem,
Define %$\omega_{max}$ and $\eta_{max}$ as,
\bea \nonumber \omega_{max}&=&
\max \{\omega ~:~|L_U(j\omega)F_\rho(j\omega)|=1\}, \\
\nonumber
\eta_{max}&=&\max_{\omega\in[0,\infty)}{|L_U(j\omega)F_\rho(j\omega)|}.
\eea Clearly, the choice of $U$ should be such that $\omega_{max}$
and $\eta_{max}$ are as small as possible.
%Otherwise, at high
%frequencies the delay term will generate unstable zeros when
%$\omega_{max}$ is large. Similarly, when $\eta_{max}$ is large,
%although $\omega_{max}$ is small, it may cause unstable zeros.
The design method given below searches for a suitable first order
$U$.
%, and it is based on the above ideas. An example will be
%given in Section \ref{subsec:exampinf}.

\noindent \textbf{Algorithm}\\
Define $U(s)=u_\infty\left(\frac{u_z+s}{u_p+s}\right)$ such that
$u_\infty,u_p,u_z\in\R$, $|u_\infty|\leq1$, $u_p>0$ and $u_p\geq
|u_\infty u_z|$,
\begin{enumerate}
  \item[1)] Fix $\rho>\gamma_{opt}$,
  \item[2)] Calculate $f_\infty$ and $k$,
  \item[3)] Calculate admissible values of $u_\infty$ by using
  Lemma \ref{eq:inftyvalues},
  if no admissible value exists, increase $\rho$ and go back to step 2,
  \item[4)]  Search admissible values for ($u_\infty$, $u_p$, $u_z$) such that
  $L_{1U}(s)$ is stable, if no admissible value exists, increase $\rho$ and
  go back to step 2,
  \item[5)] Find the triplet, $(u^o_{\infty}, u^o_z, u^o_p)$
  minimizing $\omega_{max}$ and
  $\eta_{max}$ for all admissible $(u_\infty,u_p,u_z)$.
  \item[6)] Take a Nyquist contour including the region
  ${\mathcal D}=\{s\in\rhp:
  |m_n(s)F_\rho(s)L_U(s)|>1\}$ (excluding the singularities on imaginary axis).
  Obtain Nyquist plot of $m_n F_\rho L_U $. If the number of encirclement
  of $-1$ is equal to
  unstable zeros of $E_\rho$ and $m_d$ (except the zeros on imaginary axis),
  the
  $\Hi$ controller is stable for
  $U(s)=u^o_\infty\left(\frac{s+u^o_z}{s+u^o_p}\right)$.
  Otherwise, increase $\rho$ and go back to step 2.
\end{enumerate}

When the central suboptimal controller has infinitely many $\rhp$
poles, it is not possible to obtain a stable suboptimal controller
by using a strictly proper or inner $U$. Once we find $U$ from the
above algorithm, the resulting suboptimal stable $\Hi$ controller
can be represented as cascade and feedback connections containing
finite impulse response filter that does not have unstable pole-zero
cancellations in the controller, as explained in \cite{GO-IFAC-06}.
This rearrangement eliminates unstable pole-zero cancellations in
the controller and makes the a practical implementation of the
controller feasible.

\section{Stable Suboptimal $\Hi$ Controller Design when the Optimal
Controller has Finitely Many Poles in $\rhp$}
\label{sec:finitelymany}

In this section, we will give a condition for $\Hi$ controllers to
have finitely many unstable poles. A sufficient condition for the
existence of stable suboptimal $\Hi$ controllers is given, and a
design method is proposed.

The optimal and suboptimal controllers have infinitely many unstable
poles if and only if the inequalities (\ref{eq:inequalitiesinfcond})
are satisfied. On the other hand, the $\Hi$ controllers have always
finitely many unstable poles regardless of problem data if
$F_{\gamma_{opt}}$ and $F_\rho$ are strictly proper. The following
Lemma gives a necessary and sufficient condition when
$F_{\gamma_{opt}}$ and $F_\rho$ are strictly proper.
%Since,
%magnitudes of these terms evaluated on the imaginary axis decrease
%as $\omega\rightarrow\infty$ and the delay term decays as $\sigma$
%increases, only finitely many unstable poles may appear.
%Clearly, there may be $\Hi$ controllers (depending on parameter
%values) with finitely many $\rhp$ poles while $F_{\gamma_{opt}}$
%and $F_\rho$ are bi-proper. However, it is still important to find
%a sufficient condition when $F_{\gamma_{opt}}$ and $F_\rho$ are
%strictly proper since they always results in controllers with
%finitely many unstable poles regardless of problem data.
\begin{lemma}
The $\Hi$ controller has finitely many unstable poles if the plant
is strictly proper and $W_1$ is proper (in the sensitivity
minimization problem) and, $W_1$ is proper and $W_2$ is improper (in
the mixed sensitivity minimization problem).
\end{lemma}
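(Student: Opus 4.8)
The plan is to show that under the stated hypotheses the limits
$\lim_{\omega\to\infty}|F_{\gamma_{opt}}(j\omega)L_{opt}(j\omega)|$ and
$\lim_{\omega\to\infty}|F_{\rho}(j\omega)L_U(j\omega)|$ are zero, so that the
inequalities~(\ref{eq:inequalitiesinfcond}) of Lemma~\ref{lemma:infcond}
fail, and hence (by that lemma) the optimal controller, and a suboptimal
controller with finite dimensional $U$, have only finitely many unstable
poles. Since $L_{opt}$, $L_U$ are ratios of polynomials of bounded degree,
they are $O(1)$ along the imaginary axis as $\omega\to\infty$ (indeed they
tend to the finite constant $k$ up to the $U$-dependent Möbius correction),
so the whole argument reduces to showing that $F_{\gamma_{opt}}$ and
$F_\rho$ are \emph{strictly proper}, i.e.\ $f_\infty=0$.

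First I would trace through the definition of $F_\gamma$ from
Section~\ref{sec:Cstruct}. We have
$F_\gamma(s)=G_\gamma(s)\prod_{i=1}^{n_1}\frac{s-\eta_i}{s+\eta_i}$, and the
Blaschke-type product is biproper, so $|F_\gamma(j\omega)|$ and
$|G_\gamma(j\omega)|$ have the same limit at infinity. Next I would examine
the spectral factorization~(\ref{eq:spectralfactorization}),
$G_\gamma(s)G_\gamma(-s)=\bigl(1-(\tfrac{W_2(-s)W_2(s)}{\gamma^2}-1)E_\gamma\bigr)^{-1}$
with $E_\gamma=\tfrac{W_1(-s)W_1(s)}{\gamma^2}-1$. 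Evaluating the
high-frequency behaviour: since $W_1$ is proper, $W_1(-s)W_1(s)$ tends to a
finite constant, so $E_\gamma$ tends to a finite constant $E_\infty$; since
$W_2$ is improper (mixed sensitivity case) or the plant is strictly proper
with $W_1$ proper (sensitivity case — handled through the structure of the
corresponding $F_\gamma$), the factor $\tfrac{W_2(-s)W_2(s)}{\gamma^2}-1$
grows without bound, so the right-hand side of the spectral factorization
tends to $0$. Because $G_\gamma$ is an outer $\Hi$ function, its value on
the axis is governed by the boundary modulus, and $|G_\gamma(j\omega)|^2$
equals the right-hand side evaluated at $s=j\omega$, which $\to 0$. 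Hence
$f_\infty=\lim_{\omega\to\infty}|F_\gamma(j\omega)|=0$, and the same holds
for $F_\rho$ since the only $\gamma$-dependence is through the finite
constants above.

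With $f_\infty=0$ in hand, the conclusion is immediate:
$|F_{\gamma_{opt}}(j\omega)L_{opt}(j\omega)|\to 0\cdot O(1)=0<1$ and
likewise $|F_\rho(j\omega)L_U(j\omega)|\to 0<1$ for any finite dimensional
$U$, so neither inequality in~(\ref{eq:inequalitiesinfcond}) can hold;
Lemma~\ref{lemma:infcond} then gives that both controllers have finitely
many unstable poles. I would state the two cases (sensitivity minimization
with strictly proper plant and proper $W_1$; mixed sensitivity with proper
$W_1$ and improper $W_2$) in parallel, pointing out that in the sensitivity
case the term $W_2PC(1+PC)^{-1}$ is absent, so $F_\gamma$ is built from
$G_\gamma G_\gamma(-\cdot)=E_\gamma^{-1}$ alone, and strict properness of
$F_\gamma$ follows from $E_\infty$ being bounded together with the plant's
strict properness forcing the relevant inner/outer factors to decay — the
bookkeeping is what changes between the cases, not the mechanism.

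The main obstacle I expect is the careful high-frequency accounting in the
mixed sensitivity case: one must confirm that $(\tfrac{W_2(-s)W_2(s)}{\gamma^2}-1)E_\gamma$ genuinely diverges (rather than the product of a growing and a possibly vanishing factor conspiring to stay bounded), which requires using that $W_1$ is \emph{proper} — so $E_\gamma$ has a finite, and for the relevant $\gamma$ range nonzero, limit — in combination with $W_2$ improper. A secondary subtlety is justifying that the outer spectral factor $G_\gamma$ inherits the decay of the right-hand side of~(\ref{eq:spectralfactorization}) on the imaginary axis; this is standard for outer functions (the modulus on the boundary determines the function) but should be cited or noted explicitly.
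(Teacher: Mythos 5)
Your proposal is correct and follows essentially the same route as the paper: both reduce the claim to showing that $F_{\gamma_{opt}}$ and $F_\rho$ are strictly proper via the high-frequency behaviour of the spectral factorization (\ref{eq:spectralfactorization}) under ``$W_1$ proper, $W_2$ improper,'' and then conclude finiteness of the unstable poles from the failure of the conditions in Lemma~\ref{lemma:infcond}. The only difference is cosmetic: the paper carries out this asymptotic accounting with a relative-degree function $\phi$ instead of limits along the imaginary axis, and it too leaves the degenerate cancellation case (and the pure sensitivity case) at the same level of informality that you flag.
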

\begin{proof} Transfer function $F(s)$can be written as ratio of two
polynomials, $N_F$ and $D_F$, with degrees $m$ and $n$ respectively.
We can define relative degree function, $\phi$, as \bd
\phi(F(s))=\phi\left(\frac{N_F(s)}{D_F(s)}\right)=n-m.\ed Note that
$\phi(F_1(s)F_2(s))=\phi(F_1(s))+\phi(F_2(s))$ and
$\phi(F(s)F(-s))=2\phi(F(s))$.

The optimal controller has finitely many unstable poles if
$F_{\gamma_{opt}}$ is strictly proper, i.e.
$\phi(F_{\gamma_{opt}}(s))>0$. To show this, we can write by using
definition of $F_{\gamma_{opt}}$ and
(\ref{eq:spectralfactorization}), \bea
\nonumber \phi(F_{\gamma_{opt}}(s))&=&\phi(G_{\gamma_{opt}}(s)), \\
\nonumber &=&\frac{1}{2}\;\phi(\left(W_1(s)W_1(-s)+W_2(s)W_2(-s)-\gamma_{opt}^{-2}W_1(s)W_1(-s)W_2(s)W_2(-s)\right)^{-1}), \\
\nonumber &=&-\frac{1}{2}\;\phi(\left(W_1(s)W_1(-s)+W_2(s)W_2(-s)-\gamma_{opt}^{-2}W_1(s)W_1(-s)W_2(s)W_2(-s)\right)), \\
\nonumber &=&-\frac{1}{2}\;\min{\left\{\phi(W_1(s)W_1(-s)), \phi(W_2(s)W_2(-s)), \phi(W_1(s)W_1(-s)W_2(s)W_2(-s))\right\}}, \\
\nonumber
&=&-\min{\left\{\phi(W_1(s)),\phi(W_2(s)),\phi(W_1(s))+\phi(W_2(s))\right\}}.
\eea Strictly properness of $F_{\gamma_{opt}}$ implies, \be
\label{eq:mininequality}
\min{\left\{\phi(W_1(s)),\phi(W_2(s)),\phi(W_1(s))+\phi(W_2(s))\right\}}<0.
\ee We know that $\phi(W_1(s))\geq0$ and $\phi(W_2(s))\leq0$,
\cite{FOT}. Therefore, the inequality (\ref{eq:mininequality}) is
satisfied if and only if $\phi(W_1(s))\geq0$ and $\phi(W_2(s))<0$
are valid which means that $W_1(s)$ is proper and $W_2(s)$ is
improper. Since we have $(W_2 N_o)^{-1}\in\RHi$ \cite{FOT}, we can
conclude that the plant is strictly proper. Same proof is valid for
the suboptimal case.
\end{proof}
%\begin{lemma} \label{lemma:w1w2finmany}
%The $\Hi$ controller has finitely many $\rhp$ poles if the plant
%is strictly proper and $W_1$ is proper (in the sensitivity
%minimization problem) and, $W_1$ is proper and $W_2$ is improper
%(in the mixed sensitivity minimization problem).
%\end{lemma}
%{\em Proof}. The proof is given in \cite{GO-MTNS-04}.

We know that the suboptimal $\Hi$ controllers are written as
(\ref{eq:Csubopt}). It is possible to rewrite the suboptimal
controllers as, \bd
C_{subopt}(s)=\frac{\left(\frac{N_o^{-1}(s)F_\rho(s)} {dE_\rho(s)~
dm_d(s)}\right)(L_2(s)+L_1(-s)m_n(s)F_\rho(s))} {P_1(s)+P_2(s) U(s)}
\ed where \bea \label{eq:P1_2} \nonumber
P_1(s)&=&\frac{L_1(s)+L_2(s)m_n(s)F_\rho(s)}
{nE_\rho(s)~ nm_d(s)}, \\
P_2(s)&=&\frac{L_2(-s)+L_1(-s)m_n(s)F_\rho(s)}{nE_\rho(s)~nm_d(s)},
\eea and $nE_\rho$, $dE_\rho$ and $nm_d$, $dm_d$ are minimal order
coprime numerator and denominator polynomials of
$E_\rho=\frac{nE_\rho}{dE_\rho}$ and $m_d=\frac{nm_d}{dm_d}$.
%Denominators of $P_1(s)$ and $P_2(s)$ are always cancelled by
%numerators.

The unstable poles of $C_{subopt}$ are the $\rhp$ zeros of
$P_1+P_2U$. If there exists a $U\in\RHi$ with $\|U\|_\infty<1$, such
that $P_1+P_2U$ has no unstable zeros, then the corresponding
suboptimal controller is stable.

Assume that $F_\rho$ is strictly proper which implies $P_1$ and
$P_2$ has finitely many unstable zeros. The suboptimal controller is
stable if and only if $S_U:=(1+\tilde{P}U)^{-1}$ is stable where
$\tilde{P}=\frac{P_2}{P_1}$. Note that since $P_1$ and $P_2$ has
finitely many unstable zeros, we can write $\tilde{P}$ as, \bd
\tilde{P}=\frac{\tilde{M}}{\tilde{M_d}} \tilde{N_o} \ed where
$\tilde{M}$ and $\tilde{M_d}$ are inner, finite dimensional and
$\tilde{N_o}$ is outer and infinite dimensional. Finding stable
$S_U$ with $U\in\Hi$ is considered as sensitivity minimization
problem with stable controller, \cite{Ganesh}. However, $U$ has a
norm restriction as $\|U\|_\infty\leq 1$ in our problem. Note that
$U$ can be written as, \bd U(s)=\left(\frac{1-S_U(s)}{S_U(s)}\right)
\left(\frac{P_1(s)}{P_2(s)}\right). \ed

Define $\mu_{opt}$ as, \bd \mu_{opt}=\inf_{U\in\Hi}\|S_U\|_\infty=
\inf_{U\in\Hi}\|(1+\tilde{P}U)^{-1}\|_\infty. \ed If we fix $\mu$ as
$\mu>\mu_{opt}$, then there exists a free parameter $Q$ with
$\|Q\|_\infty\leq1$ which parameterizes all functions stabilizing
$S_U$ and achieving performance level $\mu$. The notation for the
sensitivity function achieving performance level $\mu$ is
$S_{U,\mu}(Q)$.

\begin{lemma} \label{lemma:w1w2finmany}
Assume that the weights in mixed sensitivity minimization problem
(\ref{eq:wsm}), $W_1$ and $W_2$, are proper and improper
respectively and $\mu_o>\mu_{opt}$. If there exists  $Q_o$ with
$\|Q_o\|_\infty\leq1$ satisfying \be
\left|\left(\frac{1-S_{U,\mu_o}(Q_o(j\omega))}
{S_{U,\mu_o}(Q_o(j\omega))}\right)
\left(\frac{P_1(j\omega)}{P_2(j\omega)}\right)\right| \leq 1, \ee
then the suboptimal $\Hi$ controller, $C_{subopt}$, is stable and
achieves the performance level $\rho$ by selecting the parameter $U$
as, \be U(s)=\left(\frac{1-S_{U,\mu_o}(Q_o(s))((s)}
{S_{U,\mu_o}(Q_o(s))}\right)\left(\frac{P_1(s)}{P_2(s)}\right) \ee
\end{lemma}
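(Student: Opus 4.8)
The statement is essentially a transcription of the sufficient condition in Theorem~\ref{eq:thmss} into the coordinates adapted to the finitely-many-poles case, so the plan is to reduce it to the analysis already carried out around \eqref{eq:Csubopt} and \eqref{eq:P1_2}. First I would recall that, by the rewriting of $C_{subopt}$ preceding the lemma, the $\rhp$ poles of $C_{subopt}$ are exactly the $\rhp$ zeros of $P_1+P_2U$, and that under the hypothesis that $W_1$ is proper and $W_2$ improper the earlier Lemma guarantees $F_\rho$ is strictly proper, hence $P_1$ and $P_2$ have only finitely many unstable zeros and the factorization $\tilde P = (\tilde M/\tilde M_d)\tilde N_o$ is available. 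Thus stability of $C_{subopt}$ is equivalent to stability of $S_U = (1+\tilde P U)^{-1}$.

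\textbf{Key steps.} Second, I would observe that the map $U \mapsto S_U$ is a bijection between $\Hi$ functions making $S_U$ stable and the admissible sensitivity functions of the auxiliary stable-controller sensitivity problem, with inverse
\[
U(s)=\left(\frac{1-S_U(s)}{S_U(s)}\right)\left(\frac{P_1(s)}{P_2(s)}\right),
\]
as already displayed in the excerpt. Given $\mu_o>\mu_{opt}$ and the free parameter $Q$ with $\|Q\|_\infty\le1$, the function $S_{U,\mu_o}(Q_o)$ is a genuine element of that admissible set, so the corresponding $U$ defined in the lemma's last display satisfies $S_U = S_{U,\mu_o}(Q_o)$ and in particular makes $(1+\tilde P U)^{-1}$ stable; equivalently $P_1+P_2U$ has no $\rhp$ zeros, so $C_{subopt}$ has no unstable poles. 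Third, I would verify that this $U$ is admissible for the original controller parametrization \eqref{eq:Csubopt}, i.e.\ that $U\in\Hi$ with $\|U\|_\infty\le1$: membership in $\Hi$ follows because $S_{U,\mu_o}(Q_o)$ is outer-bounded-below as a sensitivity function and $P_1/P_2 = 1/\tilde P$ is handled by the inner--outer factorization, while the norm bound $\|U\|_\infty\le1$ is precisely the pointwise inequality
\[
\left|\left(\frac{1-S_{U,\mu_o}(Q_o(j\omega))}{S_{U,\mu_o}(Q_o(j\omega))}\right)\left(\frac{P_1(j\omega)}{P_2(j\omega)}\right)\right|\le1
\]
assumed in the hypothesis, invoked together with the maximum modulus theorem to pass from the imaginary axis to all of $\rhp$. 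Once $U$ is a legitimate parameter in \eqref{eq:Csubopt}, $C_{subopt}$ is by construction a suboptimal controller achieving level $\rho$, and the two items together give the claim.

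\textbf{Main obstacle.} The delicate point is not the algebra but the admissibility bookkeeping: one must make sure that the $U$ produced is genuinely in $\Hi$ (no hidden $\rhp$ pole from the factor $1/S_{U,\mu_o}(Q_o)$ or from $P_1/P_2$) and that it is exactly of the form entering \eqref{eq:Csubopt}, so that the interpolation conditions and the additional conditions on $L_1,L_2$ are inherited rather than re-imposed. In other words, the crux is checking that the auxiliary sensitivity-minimization machinery of \cite{Ganesh} is being applied to a $\tilde P$ with the correct inner--outer structure, so that the resulting $S_U$ corresponds to an honest $U\in\Hi$ with $\|U\|_\infty\le1$; everything else is a direct translation back through the definitions of $P_1,P_2$ and the formula for $C_{subopt}$.
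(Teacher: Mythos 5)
Your proposal is correct and follows essentially the same route as the paper's own (one-line) proof: the paper simply asserts that since $Q_o$ makes $S_{U,\mu_o}(Q_o)$ stable and satisfies the norm condition on $U$, the controller obtained from that $U$ has no $\rhp$ poles. Your expanded bookkeeping --- equivalence of stability of $C_{subopt}$ with stability of $S_U=(1+\tilde{P}U)^{-1}$, recovering $U$ from the displayed inverse formula, and using analyticity plus the maximum modulus principle to turn the boundary inequality into $\|U\|_\infty\leq 1$ --- is exactly the reasoning the paper leaves implicit.
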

%{\em Proof}. See \cite{GO-MTNS-04}.
\begin{proof}
The result of Lemma is immediate. Since $Q_o$ satisfies the norm
condition of $U$ and makes $S_{U,\mu}(Q_o)$ stable, the suboptimal
controller has no right half plane poles by selection of $U$ as
shown in theorem.
\end{proof}

There is no need to search for $\mu_{opt}$, since $U$ has always an
essential singularity at infinity for the optimal case, see
\cite{Ganesh}. By a numerical search, we can find $Q_o$ satisfying
the norm condition for $U$. Instead of finding $U$ resulting in a
suboptimal stable controller, the problem is transformed into
finding $Q_o$ satisfying the norm condition. First problem needs to
check whether a quasi-polynomial has unstable zeros. By Lemma
\ref{lemma:w1w2finmany}, this problem is reduced into stable
function search with infinity norm less than $1$ and a norm
condition for $U$. Conservatively, the search algorithm for $Q_o$
can be done for first order bi-proper functions such that
$Q_o(s)=u_\infty\left(\frac{s+z_u}{s+p_u}\right)$ where $p_u>0$,
$z_u\in\R$, and $|u_\infty|\leq\max{\{1,\frac{p_u}{|z_u|}\}}$. The
algorithm for this approach is explained below.

\noindent \textbf{Algorithm}\\
Assume that the optimal and central suboptimal controllers have
finitely many unstable poles. We can design a stable suboptimal
$\Hi$ controller by the following algorithm:

\begin{enumerate}
  \item[1)] Fix $\rho>\gamma_{opt}$,
  \item[2)] Obtain $P_1$ and $P_2$. If $P_1$ has no unstable
  zero, then suboptimal controller is stable for $U=0$. If
  not, go to step 3.
  \item[3)] Define the right half plane zeros of $P_1$ and $P_2$
  as $\{p_i\}_{i=1}^{n_p}$ and $\{s_i\}_{i=1}^{n_s}$ respectively.
  Define $\tilde{M}_d(s)$ and $\tilde{M}(s)$ as
  \be \label{eq:MtdMs}
  \tilde{M}_d(s)=\prod_{i=1}^{n_p}\frac{s-p_i}{s+p_i}, ~~~~~~
  \tilde{M}(s)=\prod_{i=1}^{n_s}\frac{s-s_i}{s+s_i}
  \ee
  and calculate
  \be \label{eq:wz}
   w_i=\left(\tilde{M_d}(s_i)\right)^{-1}, ~~~~~~
  z_i=\frac{s_i-a}{s_i+a}, \quad i=1,\ldots,n_s ~~~~\mbox{where
  $a>0$.}
  \ee
  \item[4)] Search for minimum $\mu$ which makes the Pick matrix
  positive semi-definite,\be \label{eq:pickmatrix}
  Q_{{P}_{(i,k)}}^{\mu}=\frac{\ln(\frac{\mu^2}
  {w_i\bar{w}_k})+j2\pi(n_k-n_i)}{1-z_i\bar{z}_k}
  \ee
  where $Q\in\mathbb{C}^{n_s\times n_s}$ and $n_{[.]}$ is integer.
  Note that most of the integers will
  not result in positive semi-definite Pick matrix. Therefore, for
  each integer set, we can find the smallest $\mu$ and $\mu_{opt}$
  will be the minimum of these values. For details, see
  \cite{Ganesh}.
  \item[5)] Fix $\mu$ such that $\mu>\mu_{opt}$. For all
  possible integer set, obtain $g(z)\in\Hi$ with interpolation
  conditions, \be \label{eq:gzinterpcond}
  g(z_i)=-\ln{\frac{w_i}{\mu}}-j2\pi n_i. \ee Note that since
  $g(z)$ has a free parameter $q(z)$ with $\|q\|_\infty\leq1$,
  we can write the function as
  $g_q(z)$. Then, search for parameters ($u_\infty$, $z_u$, $p_u$)
  satisfying
  \be \label{eq:step5magcond}
  \max_{{
  \omega\in[0,\infty)}}{\left|\frac{(1-S_{U,\mu}(j\omega))}
  {\frac{S_{U,\mu}(j\omega)P_2(j\omega)}{P_1(j\omega)}}\right|}\leq1,
  \ee
  where
  \bea \label{eq:SU}
  S_{U,\mu}(s)&=&\mu\tilde{M}_d(s)e^{-G_Q(s)}, \\
\nonumber  G_Q(s)&=&g_q\left(\frac{s-a}{s+a}\right)
  \eea and $Q(s)=u_\infty\left(\frac{s+z_u}{s+p_u}\right)$ as defined
  before. If one of the parameter set satisfies the inequality,
  then $Q_o=u_{\infty,o}\left(\frac{s+z_{u,o}}{s+p_{u,o}}\right)$
  and corresponding $U$ results in a stable suboptimal $\Hi$ controller, stop.
  If no parameter set satisfies the inequality,
  repeat the procedure for sufficiently high $\mu$,
  until a pre-specified maximum is reached, go next
  step.
  \item[6)] Increase $\rho$, go to step 2,
  if a maximum pre-specified $\rho$ is reached, stop.
  This method fails to provide a stable $\Hi$ controller.
\end{enumerate}

An illustrative example is presented in Section
\ref{subsec:exampfin}.

\section{Examples} \label{sec:examp}

Two examples will be given in this section. In the first example,
the optimal and central suboptimal controllers have infinitely many
unstable poles. By using the design method in Section
\ref{sec:infinitelymany}, we show that there exists a stable
suboptimal controller even the magnitude condition in
(\ref{eq:magcond}) is violated for low frequencies. In other words,
the example illustrates that the conditions in Theorem
\ref{eq:thmss} are only sufficient.

The second example explains the design method for suboptimal stable
$\Hi$ controller when central controller has finitely many unstable
poles. The algorithm is applied step by step as given in Section
\ref{sec:finitelymany}.

\subsection{Example with Infinitely Many Unstable Poles}
\label{subsec:exampinf} Let the weight functions in mixed
sensitivity problem (\ref{eq:wsm}) be $W_1(s)=\frac{1+0.1s}{0.4+s}$
and $W_2=0.5$, and consider the plant \be \label{example:IFtf}
P(s)=\frac{r_{p}(s)}{t_{p}(s)}=\frac{\sum_{i=1}^2
r_{p,i}(s)e^{-h_is}} {\sum_{i=1}^3 t_{p,i}(s)e^{-\tau_is}}
=\frac{(s+3)+2(s-1)e^{-0.4s}}{s^2+se^{-0.2s}+5e^{-0.5s}}. \ee The
denominator of the plant, $t_p(s)$ has finitely many $\rhp$ zeros at
$0.4672\pm1.8890j$, whereas $r_p(s)$ has infinitely many $\rhp$
zeros converging to $1.7329\pm j(5k+2.5)\pi$ as
$k\rightarrow\infty$, $k\in\mathbb{Z}_+$. The plant satisfies
assumptions $A.1$-$A.2$. We can rewrite the plant $P$ in the form
(\ref{eq:seqplant}) where $n=2$, $m=3$, \bea
 \nonumber R_i(s)&=&\frac{r_{p,i}(s)}{(s+1)^2},
 \quad\textrm{and}\quad T_j(s)=\frac{t_{p,j}(s)}{(s+1)^2}.
\eea One can see that $R$ is an $I$-system whose conjugate
$\bar{R}=-\frac{2(s+1)+(s-3)e^{-0.4s}}{(s+1)^2}$ has only one $\rhp$
zero, $0.247$ and $T$ is an $F$-system with two $\rhp$ zeros,
$0.465\pm 1.890j$. Therefore, assumptions $A.3$-$A.4$ are satisfied
by Corollary~\ref{cor:A3A4} and the plant $P$ can be factorized as
(\ref{eq:mnfac}) using (\ref{eq:IFfactorization}) \bea
\label{eq:IFexamplefactorization} \nonumber
m_n&=&M_{\bar{R}}\frac{R}{\bar{R}}
=\left(\frac{s-0.247}{s+0.247}\right)
\frac{\left(\frac{(s+3)+2(s-1)e^{-0.4s}}{(s+1)^2}\right)}
{\left(\frac{2(s+1)+(s-3)e^{-0.4s}}{(s+1)^2}\right)}, \\
\nonumber m_d&=&M_T=\left(\frac{s^2-0.93s+3.79}{s^2+0.93s+3.79}\right), \\
N_o&=&\frac{\bar{R}}{M_{\bar{R}}} \frac{M_T}{T} \eea where
$T=\left(\frac{s^2+se^{-0.2s}+5e^{-0.5s}}{(s+1)^2}\right)$, $N_o$ is
outer, $m_n$, $m_d$ are inner functions, infinite and finite
dimensional respectively. For details, see \cite{GO-IFAC-06}.

From \cite{FOT}, the optimal performance level is
$\gamma_{opt}=0.57$. The optimal controller has infinitely many
$\rhp$ poles converging to $s=0.99\pm j(5k+2.5)\pi$ as
$k\rightarrow\infty$, $k\in\mathbb{Z}_+$. If central suboptimal
controller (i.e., $U=0$) is calculated for $\rho=0.67$, it has
infinitely many $\rhp$ poles converging to $s=0.37\pm j(5k+2.5)\pi$
as $k\rightarrow\infty$, $k\in\mathbb{Z}_+$. The suboptimal
controllers can be written as (\ref{eq:Csubopt}) where
%the terms for this example $m_n$, $m_d$, $N_o$ are given in
%(\ref{eq:IFexamplefactorization}) and
\bea
\nonumber E_\rho&=&\frac{0.93+0.44s^2}{0.45(0.16-s^2)},\\
\nonumber F_\rho&=&0.67\left(\frac{0.4-s}{0.70+0.50s}\right), \\
\nonumber L_2&=&0.79s^3+2.51s^2+2.84s+3.43,\\
\nonumber L_1&=&s^3+1.49s^2+1.86s+0.65. \eea

We will use the design method of Section \ref{sec:infinitelymany} to
find a stable suboptimal controller by search for $U$ such that
$\|U\|_\infty\leq1$.  For simplicity, the algorithm is tried for the
case, $U(s)=u_\infty$.
\begin{enumerate}
  \item[1)] Fix $\rho=0.67 > \gamma_{opt}=0.57$,
  \item[2)] $k=0.79$ and $f_\infty=1.33$ are calculated.
  \item[3)] $n_1=1$, $\ell=2$, $n_1+\ell$ is odd and $|k|<1$. By using Lemma
  \ref{eq:inftyvalues}, the admissible interval for $u_\infty$ is $(0.095,0.96)$.
  \item[4)] $L_{1U}(s)$ is stable for $u_\infty\in(-0.19,0.46)$.
  \item[5)] Overall admissible values for $U$ are $u_\infty\in(0.095,0.46)$.
  The values of $\omega_{max}$ and $\eta_{max}$ for all admissible
  $u_\infty$ range can be  seen  in Figure \ref{fig:wetamax}.
  One can minimize both $\omega_{max}$ and $\eta_{max}$
  by finding the intersection of two curves,
  i.e.
  \bd
  u^o_{\infty}={\rm arg}~\min_{u_\infty} \max\{
  \omega_{max},\eta_{max}\}=0.35.
  \ed
  \item[6)] One can see that Nyquist plot in clockwise direction of $m_n F_\rho L_U$ encircles $-1$ twice in clockwise
  direction.
  Note that the unstable zeros of $E_\rho(s)$ and
  $m_d$ are $\pm1.45j$, $0.47\pm1.89j$, respectively. Since the zeros
  on the imaginary axis are excluded from Nyquist plot, there are no
  unstable zeros of $1+m_n F_\rho L_U$.
\end{enumerate}
Therefore, we can conclude that suboptimal controller is stable for
$U(s)=0.35$ and achieves the $\Hi$ norm $\rho=0.67$. For practical
implementation, the suboptimal controller found can be represented
as cascade and feedback connections containing finite impulse
response filter that does not have unstable pole-zero cancellations
in the controller, as explained in \cite{GO-IFAC-06}.
\begin{figure}[h]
\begin{center}
\includegraphics[width=8.5cm]{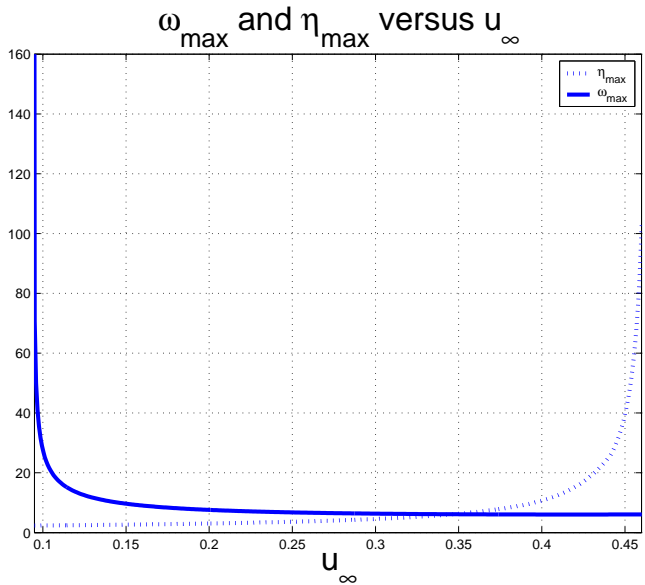}
\caption{$w_{max}$ and $\eta_{max}$ versus $u_\infty$}
\label{fig:wetamax}
\end{center}
\end{figure}
\subsection{Example with Finitely Many Unstable Poles}\label{subsec:exampfin}
For the  plant (\ref{example:IFtf}) and weights
$W_1(s)=\left(\frac{1+0.1s}{0.4+s}\right)$ and $W_2(s)=(0.01s+0.5)$,
we find the optimal performance level as $\gamma_{opt}=0.59$. The
corresponding optimal $\Hi$ controller can be written as
(\ref{eq:Copt}) which has unstable poles at $0.67\pm14.09j$,
$0.11\pm28.33j$. Note that all suboptimal $\Hi$ controllers for
finite dimensional $U$ will have finitely many unstable poles by
Corollary~\ref{cor:infitycond}. Therefore we can apply the algorithm
in Section~\ref{sec:finitelymany}.

\begin{enumerate}
  \item[1)] Fix $\rho=0.60>\gamma_{opt}=0.59$,
  \item[2)] The suboptimal controllers can be written as in (\ref{eq:Csubopt})
where $m_n$ is given in (\ref{eq:IFexamplefactorization}) and \bea
\nonumber E_\rho&=&\frac{0.94+0.35s^2}{0.36(0.16-s^2)},\\
\nonumber F_\rho&=&\frac{0.36(0.4-s)}{0.0059s^2+0.31s+0.35}, \\
\nonumber L_2&=&0.98s^3+2.45s^2+1.91s+2.10,\\
\nonumber L_1&=&s^3+1.64s^2+0.45s+1.61, \eea and $U$ is a free
parameter such that $U\in\Hi$, $\|U\|_\infty\leq1$. We can obtain
$P_1$ and $P_2$ from (\ref{eq:P1_2}). Note that $P_1$ has $\rhp$
zeros at $p_{1,2}=0.64\pm14.064j$, $p_{3,4}=0.081\pm28.314j$ and
$P_2$ has $\rhp$ zeros at $s_{1,2}=0.29\pm28.31j$,
$s_{3,4}=0.90\pm14.035j$ and $s_5=2.43$. Therefore, the central
controller (when $U=0$) for the chosen performance level,
$\rho=0.6$, is unstable.
  \item[3)] Note that $\rhp$ zeros of $P_1$ and $P_2$ are defined in
  previous step. Then, $\tilde{M}_d$ and $\tilde{M}$ can be defined
  as (\ref{eq:MtdMs}) where $n_s=4$ and $n_p=5$.
  By (\ref{eq:wz}), $w_i$ and $z_i$ can be calculated where
  conformal mapping parameter, $a$, is chosen as $1$.
  \item[4)] For all possible integers sets,
  the minimum $\mu$ resulting in positive semi-definite
  Pick matrix (\ref{eq:pickmatrix}), is $\mu_{opt}=6.15$ in which all
integers are equal to $0$.
  \item[5)] Fix $\mu=100$. The interpolation
  conditions for $g(z)$ can be written as in
  (\ref{eq:gzinterpcond}) where all integers, $n_i$, are zero.
  By the Nevanlinna-Pick interpolation, (see e.g.\cite{FOT,ZO98}),
  $g_q(z)$ is obtained.
  By transformation, $G_Q(s)$ can be calculated where $Q(s)$
  is a parameterization term such that $Q\in\Hi$ and
  $\|Q\|_\infty\leq1$. We will search for $Q$ satisfying the
  inequality (\ref{eq:step5magcond}) in the form of
  $Q(s)=u_\infty$ with $|u_\infty|\leq1$.
  Note that we choose $z_u=p_u=0$ and all
  functions in (\ref{eq:SU}) and $P_1$, $P_2$  are
  defined before. The search shows that (\ref{eq:step5magcond})
  is satisfied for $u_\infty\in[0.23,0.33]$.
  The  magnitude of $U(j\omega)$ is shown for $u_\infty=0.3$
  in Figure \ref{fig:Uversusw}.  Note that  $\|U\|_\infty\leq 1$.
  As a result, stable $\Hi$ controller achieves the performance
  level, $\rho=0.6$.
  By a numerical search, we can find many $u_\infty$ values for
  different $\mu$ resulting in stable $\Hi$ controller at $\rho=0.6$
  provided  that $U$ satisfies the norm condition for chosen $Q=u_\infty$.
  The various $u_\infty$ values resulting stable $\Hi$ controller can be seen
  in Figure \ref{fig:Uinfversusuinf}. We observe that as $\mu$
  is increased, the range of $u_\infty$ stabilizing the controller
  decreases.
   %and the minimum value of $\|U\|_\infty$ in the $u_\infty$
%  range becomes  smaller.
\end{enumerate}

\begin{figure}[h]
\begin{minipage}[t]{0.5\linewidth}
\begin{flushright}
\includegraphics[width=8.5cm]{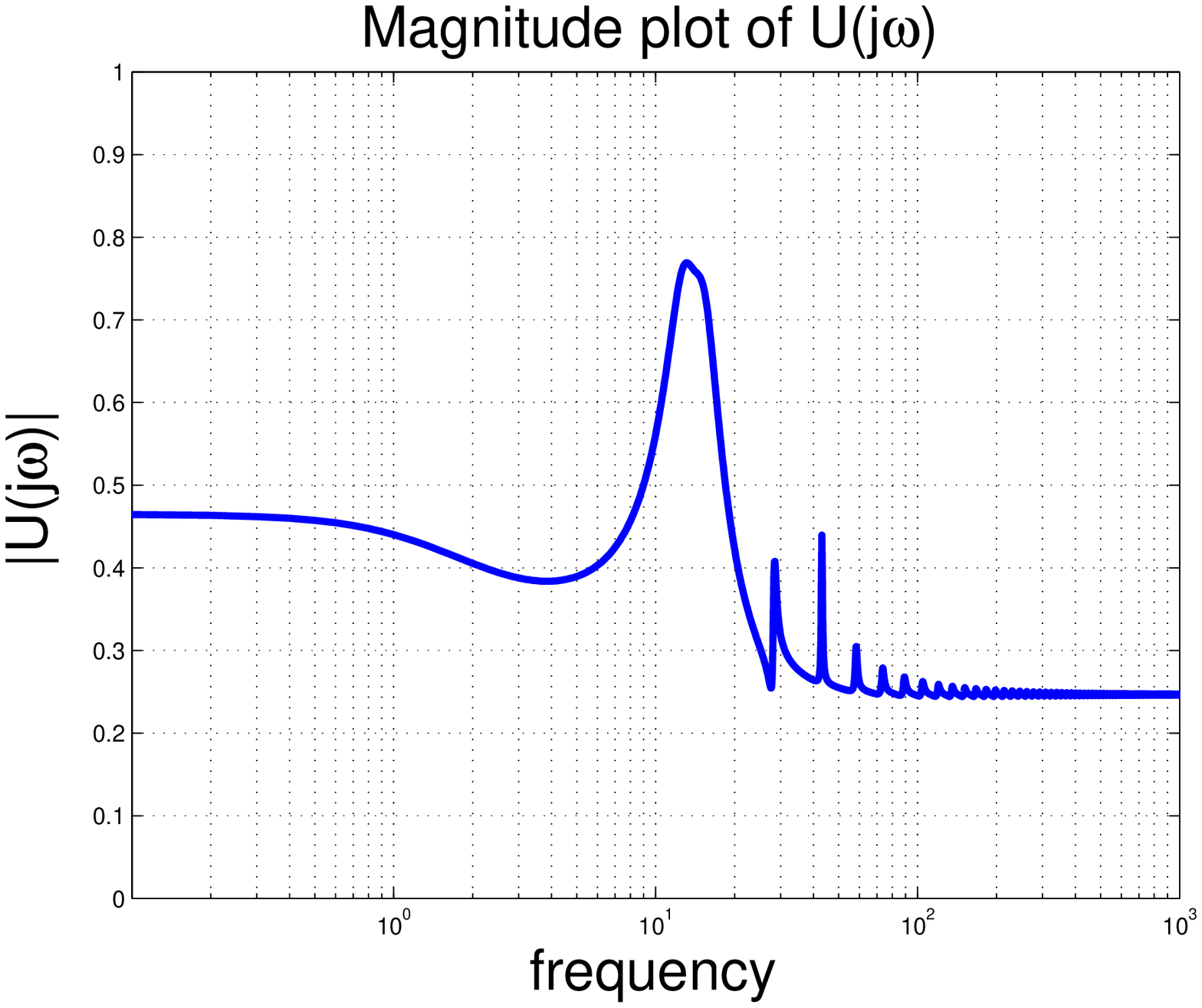}
\caption{$|U(j\omega)|$ for $\mu=100$ and $u_\infty=0.3$}
\label{fig:Uversusw}
\end{flushright}
\end{minipage}
\hfill
\begin{minipage}[t]{0.5\linewidth}
\begin{flushleft}
\includegraphics[width=8.5cm]{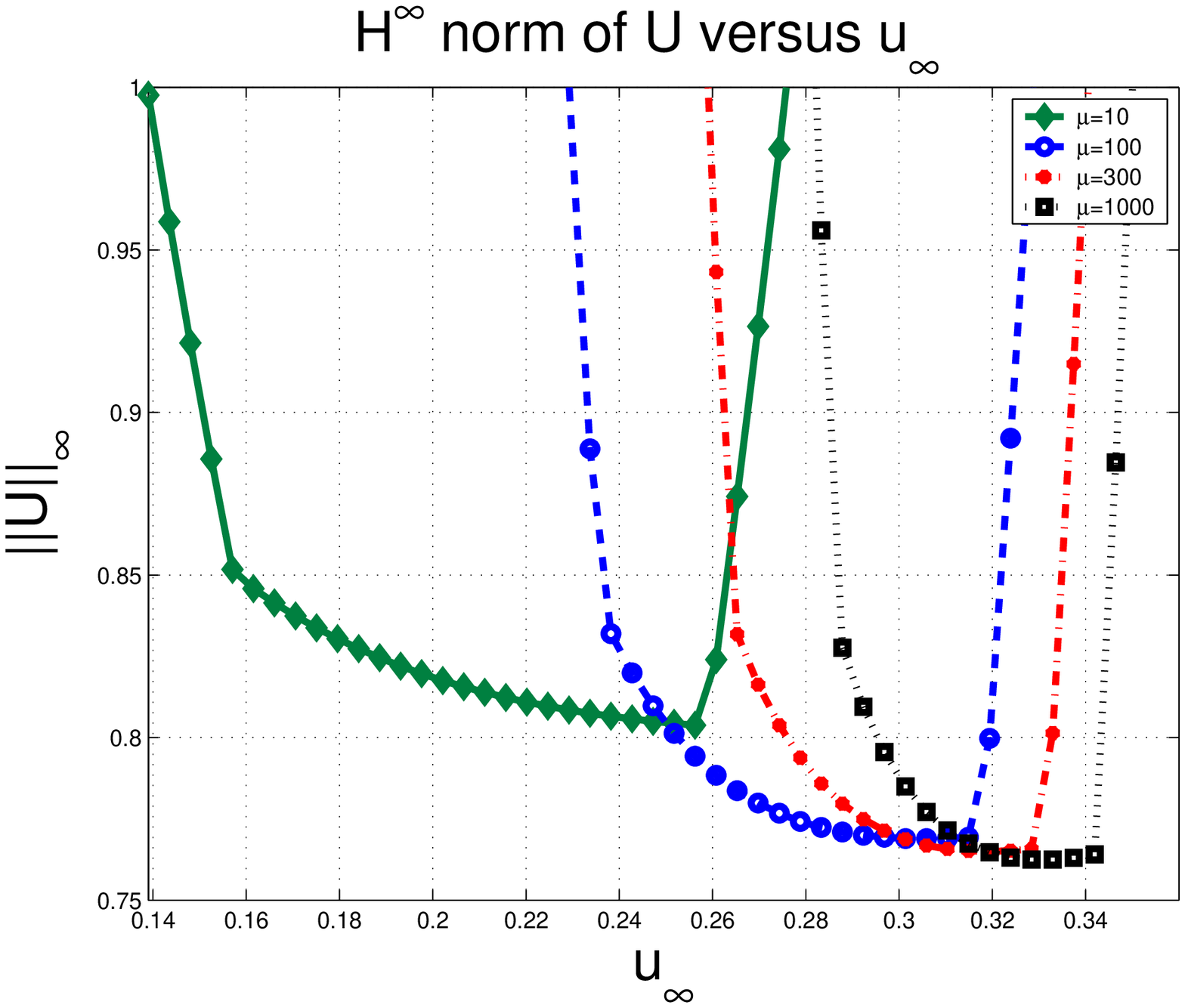}
\caption{Feasible values of $u_\infty$} \label{fig:Uinfversusuinf}
\end{flushleft}
\end{minipage}
\end{figure}

\section{Conclusions} \label{sec:concl}
In this paper, stability of $\Hi$ controllers are investigated for
general time-delay systems. Conditions on the problem data (plant
and the weights) are derived that make the optimal and central
suboptimal controllers unstable, with finitely or infinitely many
$\rhp$ poles.
%We considered the controllers in two
%subsections according to their number of poles (finite, infinite).
%For each case, necessary conditions and design methods based on
%simple sufficient condition are given to find stable suboptimal
%$\Hi$ controllers.
A search method is proposed for finding stable suboptimal
controllers by properly selecting the free design parameter $U$
appearing in the parameterization of all suboptimal $\Hi$
controllers for the class of time delay systems considered. When the
optimal and central suboptimal controllers have finitely many $\rhp$
poles the search algorithm uses the Nevanlinna-Pick interpolation to
derive feasible parameters of the first order $U$. When the optimal
and central suboptimal controllers have infinitely many poles in
$\rhp$, the search algorithm uses a Nyquist argument at each step.

\small{
}
\end{document}